\newcommand{\Tabs}{9:    \=11\=11\=11\=11\=11\=11\=11\kill}
\newcommand{\eps}{\epsilon}
\newcommand{\ceil}[1]{\lceil #1 \rceil}
\newcommand{\threepartdefotherwise}[5]
{
	\left\{
		\begin{array}{ll}
			#1 & \mbox{if } #2 \\
			#3 & \mbox{if } #4 \\
			#5 & \mbox{otherwise}
		\end{array}
	\right.
}
\newtheorem{theorem}{Theorem}[section]
\newtheorem{lemma}[theorem]{Lemma}
\newtheorem{definition}[theorem]{Definition}
\renewcommand{\paragraph}[1]{\vspace{0.15cm}\noindent {\bf #1}}
\algnewcommand\algorithmicswitch{\textbf{switch}}
\algnewcommand\algorithmiccase{\textbf{case}}
\algnewcommand\algorithmicwithprob{\textbf{with probability}}
\algnewcommand\algorithmicotherwise{\textbf{otherwise}}
\newcommand{\IncludePictures}[1]{}
\newcommand{\withCollision}[1]{}
\newcommand{\FullOrShort}{full}
	  \newcommand{\fullOnly}[1]{#1}
	  \newcommand{\shortOnly}[1]{}
  	\newcommand{\fullOnly}[1]{}
	  \newcommand{\shortOnly}[1]{#1}
\begin{document}






\date{}

\title{Fast Structuring of Radio Networks\\[0.2cm] \Large for Multi-Message Communications}

\author{
Mohsen Ghaffari\\
MIT\\
\texttt{ghaffari@mit.edu}
\and
Bernhard Haeupler\\
Microsoft Research\\
\texttt{haeupler@cs.cmu.edu}}

\maketitle

\begin{abstract}
We introduce collision free layerings as a powerful way to structure radio networks. These layerings can replace hard-to-compute BFS-trees in many contexts while having an efficient randomized distributed construction. We demonstrate their versatility by using them to provide near optimal distributed algorithms for several multi-message communication primitives.
\medskip

Designing efficient communication primitives for radio networks has a rich history that began 25 years ago when Bar-Yehuda et al. introduced fast randomized algorithms for broadcasting and for constructing BFS-trees. Their BFS-tree construction time was $O(D \log^2 n)$ rounds, where $D$ is the network diameter and $n$ is the number of nodes. \ \ Since then, the complexity of a broadcast has been resolved to be $T_{BC} = \Theta(D \log \frac{n}{D} + \log^2 n)$ rounds. On the other hand, BFS-trees have been used as a crucial building block for many communication primitives and their construction time remained a bottleneck for these primitives.

\medskip

We introduce collision free layerings that can be used in place of BFS-trees and we give a randomized construction of these layerings that runs in nearly broadcast time, that is, w.h.p. in $T_{Lay} = O(D \log \frac{n}{D} + \log^{2+\eps} n)$ rounds for any constant $\eps>0$. We then use these layerings to obtain: \ \ (1) A randomized algorithm for gathering $k$ messages running w.h.p. in $O(T_{Lay} + k)$ rounds. \ \ (2) A randomized $k$-message broadcast algorithm running w.h.p. in $O(T_{Lay} + k \log n)$ rounds. These algorithms are optimal up to the small difference in the additive poly-logarithmic term between $T_{BC}$ and $T_{Lay}$. Moreover, they imply the first optimal $O(n \log n)$ round randomized gossip algorithm. 
\end{abstract}

\newpage
\section{Introduction}
Designing efficient communication protocols for radio networks is an important and active area of research. Radio networks have two key characteristics which distinguish them from wired networks: For one, the communications in these networks have an inherent broadcast-type nature as the transmissions of one node can reach all nearby nodes. On the other hand, simultaneous transmissions interfere and this interference makes the task of designing efficient communication protocols challenging. A standard model that captures these characteristics is the \emph{radio networks model}~\cite{CK}, in which the network is abstracted as a graph $G=(V,E)$ with $n$ nodes and diameter $D$. Communication occurs in synchronous rounds, where in each round, each node either listens or transmits a message with bounded size. A node receives a message if and only if it is listening and exactly one of its neighbors is transmitting. Particularly, a node with two or more transmitting neighbors cannot distinguish this collision from background noise. That is, the model assumes \emph{no collision detection}. 

Communication problems in radio networks can be divided into two groups: single-message problems like single-message broadcast, and multi-message problems such as $k$-message broadcast, gossiping, $k$-message gathering, etc. By now, randomized single-message broadcast is well-understood, and is known to have asymptotically tight time-complexity of $T_{BC}=\Theta(D \log \frac{n}{D} + \log^2 n)$ rounds~\cite{CR, KP03, ABLP, KM}\footnote{We remark that, throughout the whole paper, when talking about randomized algorithms, we speak of the related time-bound that holds with high probability (w.h.p), where w.h.p. indicates a probability at least $1-\frac{1}{n^\beta}$ for an arbitrary constant $\beta\geq 2$.}. On the other hand, multi-message problems still remain challenging. The key issue is that, when aiming for a time-efficient protocol, the transmissions of different messages interfere with each other. Bar-Yehuda, Israeli and Itai \cite{BII93} presented an $O(D\log^2 n)$ round construction of Breadth First Search trees and used this structure to control the effects of different messages on one another in multi-message problems. Since then, BFS trees have become a standard substrate for multi-message communication protocols (see, e.g., \cite{CGL, KK, CKR, GPX05}). However, the best known construction for BFS trees remains $O(D \log^2 n)$ and this time-complexity has become a bottleneck for many multi-message problems. 

\subsection{Our Results}
As the main contribution of this paper we introduce collision-free layering which are simple node numberings with certain properties (see \Cref{sec:layer} for definitions). Layerings are structures that can be viewed as relaxed variants of BFS trees and can replace them in many contexts while having an efficient randomized construction. We present a randomized construction of these layerings that runs in $T_{Lay} = O(D \log \frac{n}{D} + \log^{2+\eps} n)$ rounds for any constant $\eps>0$. This round complexity is almost equal to the broadcast time, i.e., $T_{BC}=\Theta(D \log \frac{n}{D} + \log^{2} n)$ rounds, and is thus near-optimal. 

\smallskip
Using collision free layerings, and with the help of additional technical ideas, we achieve the following near-optimal randomized algorithms for the aforementioned multi-message problems:
\begin{itemize}
\item[(A)] A randomized algorithm for $k$-message single-destination gathering that with high probability gathers $k$ messages in $O(T_{Lay} + k)$ rounds.
\item[(B)] A randomized algorithm for $k$-message single-source broadcast with complexity $O(T_{Lay} + k \log n)$ rounds, w.h.p. This algorithm uses network coding.
\item[(C)] The above algorithms also lead to the first optimal randomized all-to-all broadcast (gossiping) protocol, which has round complexity $O(n\log n)$ rounds\footnote{We remark that an $O(n \log n)$ gossiping solution was attempted in~\cite{FQ06}, for the scenario of known topology, but its correctness was disproved~\cite{QinPersonalCommunication}.}.
\end{itemize}

\smallskip
Note that modulo the small difference between $T_{Lay}$ and $T_{BC}$, the time complexity of the above algorithms are optimal, and that they are the first to achieve the optimal dependency on $k$ and $D$.

\subsection{Related Work}
Communication over radio networks has been studied extensively since the 70's. In the following, we present a brief overview of the known results that directly relate to the setting studied in this paper. That is, randomized algorithms\footnote{We remark that typically the related deterministic algorithms have a different flavor and incomparable time-complexities, with $\Omega(n)$ often being a lower bound.}, with focus on with high probability (whp) time and under the standard and least demanding assumptions: without collision detection, unknown topology, and with messages of logarithmic size.

\medskip
\noindent\textbf{Single-Message Broadcast:} Bar-Yehuda, Goldreich, and Itai (BGI)~\cite{BGI1} gave a simple and efficient algorithm, called Decay, which broadcasts a single message in $O(D \log n +\log^2 n)$ rounds. Alon et al.~\cite{ABLP} proved an $\Omega(\log^2 n)$ lower bound, which holds even for centralized algorithms and graphs with constant diameter. Kushilevitz and Mansour~\cite{KM} showed an $\Omega(D \log{\frac{n}{D}})$ lower bound. Finally, the remaining gap was closed by the simultaneous and independent algorithms of \cite{CR} and\cite{KP03}, settling the time complexity of single-message broadcast to $T_{BC}=\Theta(D\log{\frac{n}{D}}+\log^2 n)$. 

\medskip
\noindent\textbf{$k$-Message Gathering and $k$-Unicasts:} Bar-Yehuda, Israeli and Itai (BII)~\cite{BII93} presented an algorithm to gather $k$ messages in a given destination in whp time $O(k \log^2 n +D \log^2 n)$, using the key idea of routing messages along a BFS tree via Decay protocol of \cite{BGI1}. The bound was improved to $O(k\log n + D\log^2 n)$ ~\cite{CKR} and then to $O(k + D\log^2 n)$~\cite{KK}, using the same BFS approach but with better algorithms on top of the BFS. A deterministic $O(k \log n + n \log n)$ algorithm was presented in \cite{CGL}, which substitutes the BFS trees with a new concept of Breadth-Then-Depth. 


\medskip
\noindent\textbf{$k$-Message Broadcast:} BII~\cite{BII93} also used the BFS-based approach to broadcast $k$-message in whp time $O(k\log^2 n + D\log^2 n + \log^3 n)$. Khabbazian and Kowalski~\cite{KK} improve this to $O(D\log^2 n + k\log n + \log^3 n)$ using network coding. Ghaffari et al.~\cite{NCLB} showed a lower bound of $\Omega(k \log n)$ for this problem, even when network coding is allowed, which holds even for centralized algorithms. 

\medskip
\noindent\textbf{Gossiping}: Gasieniec~\cite{Gossip10} provides a good survey. The best known results are $O(n \log^2 n)$ algorithm of Czumaj and Rytter~\cite{CR} and the $\Omega(n\log n)$ lower bound of Gasieniec and Potapov~\cite{GP}. The lower bound holds for centralized algorithms and also allows for network coding. Same can be inferred from~\cite{NCLB} as well. An $O(n \log n)$ algorithm was attempted in~\cite{FQ06}, for the scenario of known topology, but its correctness was disproved~\cite{QinPersonalCommunication}.



\section{Preliminaries}

\subsection{The Model}
We consider the standard \emph{radio network model}\cite{CK, BGI1}: The network is represented by a connected graph $G = (V, E)$ with $n=|V|$ nodes and diameter $D$. Communication takes place in synchronous rounds. In each round, each node is either listening or transmitting a packet. In each round, each listening node that has exactly one transmitting neighbor receives the packet from that neighbor. Any node that is transmitting itself or has zero or more than one transmitting neighbor does not receive anything. In the case that two or more neighbors of a listening node $v \in V$ are transmitting, we say a \emph{collision} has happened at node $v$. We assume that each transmission (transmitted packet) can contain at most one message as its \emph{payload} plus an additive $\Theta(\log n)$ bits as its \emph{header}. Since we only focus on randomized algorithms, we can assume that nodes do not have original ids but each node picks a random id of length $4\log n$ bits. It is easy to see that, with high probability, different nodes will have different ids.

\subsection{The Problem Statements}
We study the following problems:

\begin{itemize}
\item\textbf{$k$-message Single-Destination Gathering:} $k$ messages are initially distributed arbitrarily in some nodes and the goal is to gather all these messages at a given \emph{destination} node. 
\item\textbf{Single-Source $k$-Message Broadcast:} A single given \emph{source} node has $k$ messages and the goal is to deliver all messages to all nodes.
\item\textbf{Gossiping:} Each node has a single message and the goal is for each node to receive all messages. 
\end{itemize}
%
In each problem, when stating a running time for a randomized algorithm, we require that the algorithm terminates and produces the desired output within the stated time with high probability (in contrast to merely in expectation). 

We make the standard assumptions that nodes do not know the topology except a constant-factor upper bound on $\log n$. From this, given the algorithms that we present, one can obtain a constant factor estimation of $D$ and $k$ using standard double-and-test estimation techniques without more than a constant factor loss in round-complexity. We skip these standard reductions and assume that constant-factor approximations of $D$ and $k$ are known to the nodes. For simplicity, we also assume that $k$ is at most polynomial in $n$. 

\subsection{A Black-Box Tool: The CR-Braodcast Protocol}\label{sec:decaynew}
Throughout the paper, we make frequent use of the optimal broadcast protocol of Czumaj and Rytter (CR)~\cite{CR}. Here, we present a brief description of this protocol. To describe this protocol, we first need to define a specific infinite sequence of positive integers $BC$ with the following properties:

\begin{enumerate}
	\item[(1)] Every consecutive subsequence of $\Omega(\log \frac{n}{D})$ elements in $BC$ contains $1,2,\ldots,\log \frac{n}{D}$ as a subsequence.
	\item[(2)]	For every integer $k \in [\log \frac{n}{D}, \log \frac{n}{D} + \log \log n]$, any consecutive subsequence of $\Omega(\log \frac{n}{D} \cdot 2^k)$ elements in $BC$ contains an element of value $k$.
	\item[(3)] Every consecutive subsequence of $\Omega(\log n)$ elements in $BC$ contains $1,2,\ldots,\log n$ as a subsequence.
\end{enumerate}

These properties were defined in\cite[Definition 7.6]{CR} under the name \emph{$D$-modified strong deterministic density property}\footnote{We remark that the Property 3 stated here is slightly stronger than the property 3 of \cite[Definition 7.6]{CR}, but is satisfied by the sequence provided in \cite{CR}. This modification is necessary to achieve the $k \log n$ dependence on number of messages $k$ in the $k$-message broadcast problem\Cref{sec:bcast}. Using the original definition would lead to a time bound of $\Omega(k \log n \log \frac{n}{D})$}. Furthermore, it can be easily verified that the following sequence, which is again taken from \cite{CR}, satisfies these properties.  

\bigskip
\noindent \fbox {\parbox{\textwidth}{\vspace{0.05cm}
For any $n$ and $D$, we define the sequence $BC = BC_0, BC_1, \ldots$ such that for each non-negative integer $j$, we have:
\begin{center} 
$BC_{3j} = \log \frac{n}{D} + k$, where $k$ is such that $(j \mod \log n) \equiv 2^k \mod 2^{k+1}$\\
$BC_{3j+1} = j \mod \log \frac{n}{D}$ and\\
$BC_{3j + 2} = j \mod \log n$.
\end{center}
}}
\bigskip

We now present the pseudo-code of the broadcast protocol of \cite{CR}, which will be used throughout the rest of the paper. This protocol has $4$ key parameters: two disjoint sets $A$, $R$ and two integer values $\delta$ and $T$. It is assumed that each node $v$ knows the values of $\delta$ and $T$ and it also knows whether it is in $A$ and $R$, via Boolean predicates of the form $(v \in A)$ and $(v \in R)$. Each node $v \in A$ has a message $\mu_v$ (which is determined depending on the application of the protocol). The protocol starts with nodes in $A$ where each active node $v\in A$ forwards its message. The nodes in $R$ become active (join $A$) at the end of the first phase in which they receive a message, and retransmit this message in the next phases. \Cref{alg:decay} presents the pseduo-code for algorithm CR-Broadcast($A$, $R$, $\delta$, $T$):


\begin{algorithm}[h]
\caption{Algorithm CR-Broadcast($A$, $R$, $\delta$, $T$) @ node $v$:}
\begin{algorithmic}[1]
\normalsize
\Statex Syntax: each \textsc{transmit} or \textsc{listen} corresponds to one communication round
\Statex
\If{$(v \in A)=false$} $\mu_v \gets \emptyset$ \EndIf

\For {phase $i = 1$ to $T$}
	\For {$j = 1$ to $\delta$}
		\If {$(v\in A)=true$}
			\WithProb{$2^{-BC_{i \delta + j}}$}
				\State \textsc{transmit} $(v.id, \mu_v)$
			\Otherwise
			\State \textsc{listen}
			\EndWithProb
		\Else
	    \State  \textsc{listen}
		\EndIf
		\If {received a message $(u.id, \mu)$} $\mu_v \gets \mu$ \EndIf
	\EndFor
	\If {$\mu \neq \emptyset$ \& $(v\in R)$} $(v \in A) \gets true$ \EndIf
\EndFor
\label{alg:decay}
\end{algorithmic}
\end{algorithm}

%
%
%
%
%
%
%
%
%
%
%
%
%

We will use the following lemma from \cite{CR} and \cite{BGI1}:

\begin{lemma}\label{lem:CR-global}
For any connected network $G = (V,E)$ with diameter $D$ and for any node $v$, an execution of $\mu_v$ CR-Broadcast($\{v\}$, $V \setminus \{v\}$, $\delta$, $T$) with $T = \Theta(D (\log \frac{n}{D} + \delta) + \log^2 n)/\delta$ leads with high probability to $S_0 = V$ and $\mu_u = \mu_v$. That is, broadcasting a message from $v$ to all nodes takes with high probability at most $T$ rounds. 
\end{lemma}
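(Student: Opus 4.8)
The plan is to follow the analysis of Czumaj and Rytter \cite{CR} (building on the Decay analysis of \cite{BGI1}), since the statement is essentially their broadcast theorem specialized to the phase parameter $\delta$; I sketch the main steps rather than reproduce the bookkeeping. First I would fix a BFS layering of $G$ rooted at $v$: let $L_0=\{v\}$ and let $L_i$ be the set of nodes at distance exactly $i$ from $v$, so $0\le i\le D$. Write $A_i\subseteq V$ for the active set at the start of phase $i$, i.e. the nodes holding $\mu_v$ that transmit during phase $i$. By the pseudocode $A_i$ is frozen throughout phase $i$ and grows only at phase boundaries, which is exactly what lets us reason about clean propagating "waves." It then suffices to prove that, with high probability, within $O\bigl((D(\log\frac{n}{D}+\delta)+\log^2 n)/\delta\bigr)$ phases every node has joined the active set, since a node becomes active only after receiving $\mu_v$, so at that point $\mu_u=\mu_v$ for all $u$ and $S_0=V$.

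The engine is the standard single-round Decay estimate: if an uninformed node $u$ has $m\ge 1$ active neighbors each transmitting independently with probability $2^{-b}$, then the probability that exactly one of them transmits, so that $u$ receives $\mu_v$, equals $m\,2^{-b}(1-2^{-b})^{m-1}$, which is bounded below by a positive constant whenever $2^{b-1}\le m\le 2^{b}$. Thus $u$ makes progress in any round whose sequence value in $BC$ "matches" its current number of active neighbors. The three density properties of $BC$ are tailored precisely to guarantee that matching values recur often enough: Property (1) ensures that within every window of $\Theta(\log\frac{n}{D})$ rounds all of $1,\dots,\log\frac{n}{D}$ appear, covering every frontier node whose active-neighbor count is at most $n/D$ (the "sparse" regime); Properties (2) and (3) supply the larger values $k>\log\frac{n}{D}$, but only inside longer windows of length $\Theta(\log\frac{n}{D}\cdot 2^{k})$ respectively $\Theta(\log n)$, which is exactly what is affordable for the rarer "dense" frontier nodes.

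With these two ingredients I would run a layer-by-layer progress argument. Once $L_{i-1}$ is fully active, every node $u\in L_i$ has all of its distance-$(i-1)$ neighbors transmitting, so by the Decay estimate together with Property (1), $u$ receives $\mu_v$ within $O(\log\frac{n}{D})$ further rounds with constant probability in the sparse case; since work is quantized into phases of length $\delta$, advancing one layer costs $O(\delta+\log\frac{n}{D})$ rounds, for $O(D(\delta+\log\frac{n}{D}))$ over all $D$ layers. A Chernoff bound boosts each per-node success to high probability at the cost of an extra $O(\log n)$ repetition factor on an $O(\log n)$-length process, contributing the additive $O(\log^2 n)$ term, after which a union bound over the $n$ nodes yields the global guarantee. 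The hard part, and the crux of the CR improvement over naive Decay, is the dense regime: a frontier node may have up to $n$ active neighbors, which naively threatens a $\log n$ rather than $\log\frac{n}{D}$ factor per layer. This is resolved by an amortized argument showing that, summed along any shortest path from $v$, the rounds spent on dense frontiers (active-neighbor count exceeding $n/D$, handled by the larger values via Properties (2)--(3)) total only $O(\log^2 n)$ rather than $\Theta(D\log n)$, keeping the per-layer cost at $O(\delta+\log\frac{n}{D})$ in the typical case. I expect verifying this amortization, and carefully controlling the interference created when adjacent, only partially informed layers transmit simultaneously, to be the main technical obstacle; the remaining steps are routine Decay bookkeeping and union bounds.
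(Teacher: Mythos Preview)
The paper does not prove this lemma itself; it is stated as a black-box result imported from \cite{CR} (with the underlying Decay estimate from \cite{BGI1}). Your plan to rerun the Czumaj--Rytter argument is therefore exactly what the paper relies on, and your sketch correctly isolates its moving parts: the single-round Decay bound, the three density properties of the sequence $BC$, the per-phase $\delta$ quantization, and the shortest-path amortization that trades the naive $D\log n$ for $D\log\frac{n}{D}+O(\log^2 n)$.
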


\begin{lemma}\label{lem:CR-local}
In each execution of CR-Broadcast protocol, for any two neighboring nodes $u$ and $v$, if $(u \in A)=true$ and $(v \in A)=false$ at round $r$, then in round $r+\Theta(\log^2 n)$, w.h.p., node $v$ has received a message from some node.
\end{lemma}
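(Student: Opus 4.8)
The plan is to reduce this to the classical ``Decay'' analysis of Bar-Yehuda, Goldreich and Itai, adapted to the fact that the set of active neighbors of $v$ can \emph{grow} during the window. First I would observe that, since a node never leaves $A$ once it joins (and in particular $u$ stays active from round $r$ onward), the number $d_t$ of active neighbors of $v$ at round $t \in [r,\, r+\Theta(\log^2 n)]$ is a non-decreasing integer in $[1,n]$, as long as $v$ itself has not yet received. If $v$ does receive at some point we are already done, so it suffices to bound the probability of the complementary event that $v$ receives nothing throughout the window.

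Next I would compute the per-round reception probability. Conditioned on the history up to the start of round $t$ (which determines $d_t$), the listening node $v$ receives a message exactly when precisely one of its $d_t$ active neighbors transmits, an event of probability $q_t := d_t\, p_t (1-p_t)^{d_t-1}$ with $p_t = 2^{-BC_{t}}$. By the tower rule, the probability that $v$ never receives in the window is at most $\mathbb{E}\big[\exp(-\sum_t q_t)\big]$, so it suffices to prove the \emph{deterministic} lower bound $\sum_t q_t = \Omega(\log n)$ along every admissible (non-decreasing, $[1,n]$-valued) trajectory of $d_t$. This yields a failure probability of $n^{-\Omega(1)}$, and taking the constant hidden in $\Theta(\log^2 n)$ large enough (depending on $\beta$) pushes it below $n^{-\beta}$.

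To get this sum bound I would partition the window into $\Theta(\log n)$ consecutive blocks, each of $\Omega(\log n)$ rounds, chosen long enough that Property~(3) of the sequence $BC$ guarantees every exponent $1,2,\ldots,\log n$ occurs inside each block. Call a block \emph{stable} if $d_t$ does not double within it. In a stable block with starting degree $d_0$, the round whose exponent equals $\lceil \log d_0 \rceil$ (or $1$ if $d_0=1$) has $p_t = \Theta(1/d_0)$ while $d_t \in [d_0, 2d_0)$, so $d_t p_t = \Theta(1)$ and $(1-p_t)^{d_t-1} = \Theta(1)$; hence $q_t \ge c$ for an absolute constant $c>0$, and each stable block contributes at least $c$ to $\sum_t q_t$. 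Finally I would count stable blocks: since $d_t$ only grows, from at least $1$ to at most $n$, the total multiplicative increase is at most $n$, so a doubling can occur in at most $\log n$ blocks. With $C\log n$ blocks in total, at least $(C-1)\log n$ are stable, giving $\sum_t q_t \ge c(C-1)\log n = \Omega(\log n)$ as required.

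I expect the main obstacle to be precisely this synchronization issue: the active set that $v$ sees is not fixed during a Decay cycle, so a naive ``one good round per cycle'' argument can be spoiled by a burst of neighbors activating and throwing off the match between $p_t$ and $d_t$. The doubling/charging argument above is what I would use to control it, and the only point needing care is verifying that the per-round bound $q_t \ge c$ is a genuinely deterministic consequence of $d_t$ lying in $[d_0, 2d_0)$ (so that the supermartingale product bound applies path-by-path rather than merely in expectation).
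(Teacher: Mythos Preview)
The paper does not actually prove this lemma; it imports it as a black box from \cite{CR} and \cite{BGI1} (see the sentence immediately preceding \Cref{lem:CR-global} and \Cref{lem:CR-local}). So there is no in-paper proof to compare against. Your argument is a correct self-contained reconstruction of the classical Decay analysis: the per-round success probability $q_t = d_t\,p_t(1-p_t)^{d_t-1}$, the reduction via a product/supermartingale bound to a pathwise lower bound on $\sum_t q_t$, and the block-plus-doubling trick to cope with the monotone growth of the active-neighbor count. The only place your write-up is informal is the ``tower rule'' step; the clean version is to set $M_t = \mathbb{1}[\text{$v$ not yet received by round }t]\cdot e^{\sum_{s\le t} q_s}$, check $\mathbb{E}[M_t\mid \mathcal F_t]\le M_{t-1}$ from $e^{x}(1-x)\le 1$, and conclude $\Pr[\text{never receive}]\le e^{-L}$ once your pathwise bound gives $\sum_t q_t\ge L$ on that event. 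Your pathwise bound indeed uses only that $(d_t)$ is non-decreasing and $[1,n]$-valued, which holds on the event that $v$ has not yet received (nodes never leave $A$, and $u$ certifies $d_t\ge 1$); and Property~(3) of the $BC$ sequence supplies the needed exponent in every $\Omega(\log n)$-length block. The proposal is sound.
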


\section{Layerings}\label{sec:layer}
Here, we introduce \emph{layerings} and we provide a set of algorithms for constructing layerings with desirable properties. 
\subsection{Definitions}
In short, layerings are particular types of numbering of nodes; they organize and locally group nodes in a way that is useful for multi-message gathering and broadcasting tasks and for parallelzing and pipelining communications. In this subsection, we present the formal definitions.
\begin{definition}(\textbf{layering})\label{def:layering}
A layering $\ell$ of graph $G=(V,E)$ assigns to each node $u \in V$ an integer layer number $\ell(u)$ such that \textbf{(a)} there is only one node $s$ with $\ell(s)=0$, known as the \emph{source}; and \textbf{(b)} every node $u$, except the source, is connected to a node $v$ such that $\ell(v) < \ell(u)$. We define the depth of layering $\ell$ to be equal to $\max_{u\in V} \ell(u)$. In the distributed setting, for a layering $\ell$, we require each node $u$ to know its layer number $\ell(u)$, and also, for each node $u$ other than the source, we require $u$ to know (the ID of) one node $v$ such that $\ell(v) <\ell(u)$ and $u$ is a neighbor of $v$. In this case, we call $v$ \emph{the parent} of $u$. 
\end{definition}

\begin{definition} (\textbf{$C$-collision-free layering}) A layering $\ell$ together with a $C$-coloring of the nodes $c:V \rightarrow\{0,\ldots,C-1\}$ is \emph{$C$-collision-free} if for any two nodes $u$ and $v$ such that $\ell(u) \neq \ell(v)$ and $dist_{G}(u,v) \leq 2$, we have $c(u) \neq c(v)$. In the distributed setting, we require each node $v$ to know the value of $C$ and also its own color $c(v)$.
\end{definition}

\begin{definition} (\textbf{$d$-stretch layering}) A layering $\ell$ is \emph{$d$-stretch} if for any two neighboring nodes $u$ and $v$, we have $|\ell(u) - \ell(v)|\leq d$.
\end{definition}

\begin{figure}[t]
	\centering
		\includegraphics[width=0.75\textwidth]{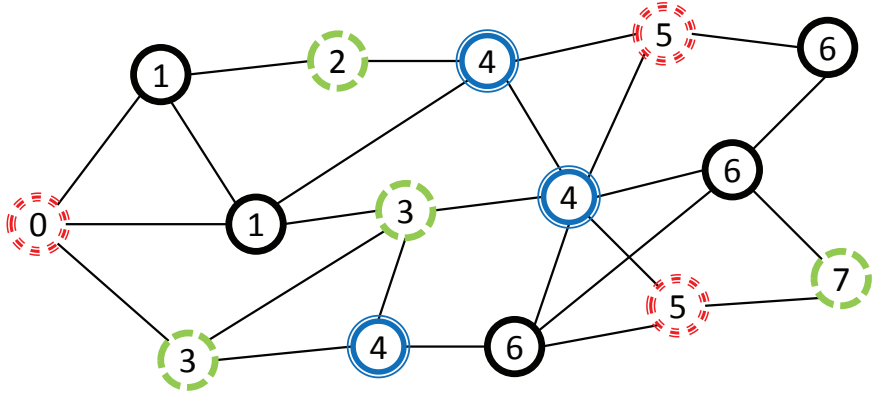}
	\caption{\small A $4$-collision-free layering with depth $7$ and stretch $3$. The number in each node indicates its layer number.}
	\label{fig:Layering}
\end{figure}


We remark that a \emph{BFS-layering} in which each node is labeled by its distance from the source is a simple example for a layering with stretch $1$ and depth $D$. We also remark that any $d$-stretch layering $\ell$ can also be made $(2d+1)$-collision-free by choosing $C = 2d+1$ and $c(u) = \ell(u) \mod C$. This makes BFS-layerings $3$-collision free. In the next sections we show that \emph{pseudo-BFS layerings}, that is, layerings with similar collision freeness and depth, can be constructed efficiently and can replace BFS layerings in many scenarios:

%
%
%
%

\begin{definition}(\textbf{pseudo-BFS layering})
A layering (and a related coloring) is a pseudo-BFS layering if it is $O(1)$-collision-free and has depth $O(D + \log n)$. 
\end{definition}



\subsection{Layering Algorithms}

Here, we show that pseudo-BFS layerings can be constructed in almost broadcast time, that is, $T_{BC} = O(D\log\frac{n}{D}+\log^{2} n)$ rounds. This is faster than the best known construction time of BFS layerings, which remains $O(D \log^2 n)$ rounds.

\begin{theorem}\label{thm:layer} There is a distributed randomized algorithm that for any constant $\epsilon>0$, constructs a pseudo-BFS layering w.h.p., in $O(D \log{\frac{n}{D}} + \log^{2+\epsilon} n)$ rounds. 
\end{theorem}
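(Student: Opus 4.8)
The plan is to reduce the theorem to building a single layering $\ell$ of depth $O(D+\log n)$ and \emph{constant stretch}, because the coloring is then automatic: as the paper observes right after the layering definitions, any $d$-stretch layering is made $(2d+1)$-collision-free by setting $C=2d+1$ and $c(u)=\ell(u)\bmod C$. Hence a constant-stretch, depth-$O(D+\log n)$ numbering is already a pseudo-BFS layering, and the coloring costs nothing extra. The parent pointers demanded by \Cref{def:layering} also come for free: each non-source node records the neighbour whose transmission set (or last lowered) its number, and by construction that neighbour carries a strictly smaller layer. So I would aim squarely at a constant-stretch numbering of depth $O(D+\log n)$ within the stated time.

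The one tool that can afford the $D\log\frac{n}{D}$ term only once is a single global run of CR-Broadcast from $s$ with $\delta=\Theta(\log\frac{n}{D})$, which by \Cref{lem:CR-global} floods $G$ in $T_{BC}=O(D\log\frac{n}{D}+\log^2 n)$ rounds. Label each node by its first-reception round $t(u)$, rescaled by $\Theta(\log\frac{n}{D})$. A short case analysis over the regimes of $D$ shows the resulting depth is $O(D+\log n)$ in all cases (when $D$ is large the $\log^2 n$ slack is dominated by $D$, and when $D$ is small $\log\frac{n}{D}=\Theta(\log n)$ absorbs it). By \Cref{lem:CR-local} neighbouring nodes are reached within $O(\log^2 n)$ rounds of one another, so this numbering is a valid layering of the right depth but of stretch only $\poly\log n$. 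This is the heart of the matter: the CR schedule buys an amortized broadcast speed of $O(\log\frac{n}{D})$ rounds per hop of genuine progress, whereas pushing a frontier reliably one hop through a \emph{dense} neighbourhood costs $\Theta(\log^2 n)$ (the price in \Cref{lem:CR-local}), and that variable speed is exactly what inflates the stretch.

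To drive the stretch down to $O(1)$ without re-paying the $D\log\frac{n}{D}$ term, I would apply a sequence of \emph{local} refinements, each confined to thin slabs of $O(\text{current stretch})$ consecutive layers. Inside each slab one re-runs a decay-style local broadcast to realign labels to genuine local hop-counts, and overlapping the slabs keeps the relabelling globally consistent; each refinement is local, costs only $\poly\log n$ rounds, shrinks the stretch by a large factor, and inflates the depth by only an additive $O(\log n)$. Since the initial stretch is $\poly\log n$ and each level reduces it sharply, $O(1/\epsilon)$ (sub-logarithmically many) levels reach constant stretch; the per-level $\poly\log n$ cost times the number of levels sums to $O(\log^{2+\epsilon} n)$, so the total is $O(D\log\frac{n}{D}+\log^{2+\epsilon} n)$, with the global term paid exactly once in the initial broadcast.

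The step I expect to be the main obstacle is the analysis of the refinement rather than its description. One must prove that each slab has small \emph{hop}-diameter, so that the confined decay-broadcast finishes in $\poly\log n$ rounds and the relabelling genuinely cuts the stretch; this requires a charging argument, invoking the density properties (1)--(3) of the $BC$ sequence quantitatively, to show that the slow dense neighbourhoods along any shortest $s$-to-$u$ path are rare enough that the accumulated depth slack stays $O(\log n)$. The remaining bookkeeping is to balance the $\log^{\epsilon}$ budget across the $O(1/\epsilon)$ levels and to carry a union bound over all nodes and all refinement levels so that the whole construction succeeds w.h.p.
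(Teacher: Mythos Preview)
Your high-level plan is the paper's: pay the $D\log\frac{n}{D}$ term once with a single global CR-Broadcast to get a crude layering of polylogarithmic stretch (\Cref{lem:basicLayering}), then run $O(1/\epsilon)$ levels of purely local refinement inside thin slabs, each level costing $\Theta(\log^{2+1/r} n)$ rounds. The paper phrases this as a top-down recursion (the Recursive Layer Refinement Algorithm): with $r=\lceil 1/\epsilon\rceil$ and $\tau=\Theta(\log^{1/r} n)$, level $i$ layers a strip of depth $\tau^i$ by running CR-Broadcast with $\delta=\Theta(\log^{2-(i-1)/r} n)$, obtaining stretch $\tau^{i-1}$, cutting into sub-strips of that width, and recursing with $A_{i-1}$.

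Two places where you misplace the difficulty. The obstacle you flag---bounding the slab hop-diameter---is immediate and needs no charging over the $BC$ sequence: in any layering every non-source node has a parent with \emph{strictly} smaller layer number, so following parent pointers from any node in a width-$d$ slab reaches the slab's bottom in at most $d$ hops; a CR-Broadcast with $T=O(d)$ phases inside the slab therefore suffices. The density properties of $BC$ are used once, encapsulated in \Cref{lem:CR-global} and \Cref{lem:CR-local}, to bound the depth and stretch of the \emph{initial} layering; they are not re-invoked during refinement.

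Conversely, the step you treat as routine---``overlapping the slabs keeps the relabelling globally consistent''---is where the actual care is needed, and your target of \emph{constant stretch} is not what this construction produces. The paper detects slab-interface nodes (``boundary'' and ``start-line'') with a one-phase broadcast, gives each recursion level two reserved colors of its own (hence $O(1/\epsilon)$ colors in total, not $O(1)$), and sets the interface layer numbers arithmetically from the crude layering so that the per-strip labellings nest. The result is $O(1)$-collision-free \emph{directly}, but its stretch stays $\Theta(d)$ and its depth grows by a multiplicative $2^{O(1/\epsilon)}$ factor rather than the additive $O(\log n)$ per level you state; both are harmless for constant $\epsilon$, but your plan of first reaching constant stretch and then coloring by $\ell\bmod(2d+1)$ would require a tighter stitching argument than ``overlap the slabs'' provides.
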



\subsubsection{Starter: A construction with round-complexity $O(D \log{\frac{n}{D}} + \log^{3} n)$}
\begin{theorem}\label{crl:layer} There is a distributed randomized algorithm that w.h.p. constructs a pseudo-BFS layering from a given source node $s$ in $O(D \log{\frac{n}{D}} + \log^3 n)$ rounds.
\end{theorem}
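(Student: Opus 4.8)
The plan is to construct a pseudo-BFS layering by combining the CR-Broadcast protocol (Lemma~\ref{lem:CR-global}) as a ``clock'' with a local coloring scheme, in such a way that the layer number of each node records roughly the phase at which it first received the broadcast message. Concretely, I would run CR-Broadcast($\{s\}$, $V\setminus\{s\}$, $\delta$, $T$) from the source $s$ with $\delta = \Theta(\log n)$, so that by Lemma~\ref{lem:CR-global} the total round count is $O(D(\log\frac{n}{D}+\log n)/\log n + \log n) = O(D\log\frac{n}{D}/\log n + D + \log n)$ phases; since we want the eventual \emph{round} count to be $O(D\log\frac{n}{D}+\log^3 n)$, the budget of $\log^3 n$ in the additive term is exactly what a $\Theta(\log^2 n)$-round local subroutine per ``layer boundary'' buys us. The idea is that a node sets its layer number to be the index of the phase in which it becomes active (joins $A$), so that $\ell(u) < \ell(v)$ whenever $u$ activates $v$, and the parent pointer is simply the node whose message $u$ first received. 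This immediately gives property (b) of Definition~\ref{def:layering} and makes the parent structure well-defined.

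The key quantitative facts I would need to establish are the depth bound and the stretch/collision-freeness. For depth: by Lemma~\ref{lem:CR-global} the broadcast completes in $O(D\log\frac{n}{D}+\log^2 n)$ rounds, and with $\delta=\Theta(\log n)$ this is $O(D + \log^2 n / \log n + \dots)$ phases, so the largest layer index is $O(D + \log n)$, matching the required depth $O(D+\log n)$. For the stretch bound, I would invoke Lemma~\ref{lem:CR-local}: once a node $u$ is active at round $r$ and has a non-active neighbor $v$, then $v$ receives a message (and hence activates) within $\Theta(\log^2 n)$ additional rounds. This forces neighboring nodes' activation times, hence their layer numbers, to differ by at most $O(\log^2 n / \delta) = O(\log n)$ phases, giving a layering of stretch $d = O(\log n)$. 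Such a layering is automatically $(2d+1)$-collision-free by the observation in the text, but $2d+1 = O(\log n)$ colors is far more than the required $O(1)$.

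The main obstacle, therefore, is driving the number of colors down from $O(\log n)$ to $O(1)$ while preserving collision-freeness — this is where the $\log^3 n$ term should really come from and where the real work lies. The plan is to \emph{not} color by $\ell(u)\bmod C$ directly, but instead to run a separate, local recoloring procedure: group the layers into blocks and, within the distance-2 neighborhood constraints, reassign each node one of a constant number of colors so that any two nodes at distance $\le 2$ in different layers get different colors. Because stretch is $O(\log n)$, the set of distinct layer numbers appearing within any distance-2 ball has size $O(\log n)$, so naively one still needs $\Omega(\log n)$ colors; the trick must be that collision-freeness only needs to separate \emph{different} layers locally, and by using the layering's own structure (a ``staircase'' coloring that repeats with a constant period once we coarsen the layer numbers to a granularity comparable to the stretch) one can collapse to $O(1)$ colors. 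I would implement this coarsening by rounding each layer number down to a multiple of $\Theta(\log n)$ and then certifying, via a $\Theta(\log^2 n)$-round Decay-style exchange within each distance-2 neighborhood (repeated $O(\log n)$ times to cover all boundaries, yielding the $\log^3 n$ cost), that neighbors in distinct coarsened layers indeed receive distinct colors w.h.p. The delicate part — and the step I expect to require the most care — is arguing that this coarsening does not destroy property (b): rounding layer numbers could make a parent's layer number equal to (rather than strictly less than) its child's, so I would need to keep the fine-grained layering for the parent/connectivity guarantee while using only the coarsened version for the coloring, and verify that the two are mutually consistent. Finally I would take a union bound over the $O(D+\log n)$ phases and the local subroutines to conclude that all the stated properties hold w.h.p., giving the claimed $O(D\log\frac{n}{D}+\log^3 n)$ round complexity.
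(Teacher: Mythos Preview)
Your first stage---running CR-Broadcast and setting $\ell(u)$ to the activation phase---is exactly the paper's basic layering (Lemma~\ref{lem:basicLayering}), and your depth/stretch analysis via Lemmas~\ref{lem:CR-global} and~\ref{lem:CR-local} is right. But there are two genuine gaps.

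\textbf{First, a round-complexity issue.} With $\delta=\Theta(\log n)$ the basic layering costs $\Theta(D\log n+\log^2 n)$ rounds, not $O(D\log\frac{n}{D}+\log^3 n)$. When $D$ is close to $n$ (say $D=n/2$) you have $D\log n=\Theta(n\log n)$ while the target is $O(D\log\frac{n}{D}+\log^3 n)=O(n)$. The paper handles this by a case split: for $D<n^{0.1}$ one has $\log n=\Theta(\log\frac{n}{D})$ so $\delta=\Theta(\log n)$ is fine, but for $D\geq n^{0.1}$ one instead takes $\delta=\log\frac{n}{D}$, accepting a larger stretch $O(\log^2 n)$ and paying $O(\log^4 n)$ in refinement, which is absorbed into $O(D\log\frac{n}{D})$.

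\textbf{Second, and more fundamentally, your refinement step is not an algorithm.} You correctly identify the obstacle---driving $O(\log n)$ colors down to $O(1)$---and you even spot the inconsistency that coarsening the layer numbers breaks the parent property while keeping the fine layering and coloring by the coarse one breaks collision-freeness (two nodes at distance $\leq 2$ in the same coarse block but different fine layers would share a color). But you do not resolve this tension; ``certifying via a $\Theta(\log^2 n)$-round Decay exchange'' is verification, not construction. The paper's Layer Refinement Algorithm resolves it by \emph{re-layering}, not just recoloring: it cuts the graph into strips of $O(d)$ original layers using designated boundary and start-line nodes (colors $0$ and $1$), and then within each strip runs a fresh CR-Broadcast with $\delta=\Theta(\log^2 n)$ so that the wave advances \emph{exactly one hop per phase} (Lemma~\ref{lem:CR-local}). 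Nodes take their new layer number $\ell'$ from this wave and cycle through colors $\{2,3,4\}$. Because the new wave has stretch $1$ inside each strip and strips are separated by the dedicated boundary/start-line colors, five colors suffice; and because $\ell'$ is produced by an actual propagation, every node has a parent with strictly smaller $\ell'$. This is the missing idea: you must build a \emph{new} layering whose local stretch is $1$, not try to recolor the old stretch-$O(\log n)$ one.
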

The high-level outline of this construction is to start with a crude basic layering obtained via a broadcast and then refining this layering to get a pseudo-BFS layering.
Given the broadcast protocol presented in \Cref{sec:decaynew}, we easily get the following basic layerings:

\begin{lemma}\label{lem:basicLayering}
For any $\delta \in [\log \frac{n}{D},\log^2 n]$ there is a layering algorithm that  computes, w.h.p., an $O(D + \frac{\log^2 n}{\delta})$-depth layering with a given source $s$ and stretch $O(\frac{\log^2 n}{\delta})$ in $O(D \delta + \log^2 n)$ rounds.
\end{lemma}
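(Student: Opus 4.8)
The plan is to obtain the layering as a byproduct of a single run of the CR-Broadcast protocol (\Cref{alg:decay}) started from the given source $s$. Concretely, I would invoke CR-Broadcast$(\{s\}, V\setminus\{s\}, \delta, T)$ with $T = \Theta\big(D(\log\frac{n}{D}+\delta)+\log^2 n\big)/\delta$, the same threshold as in \Cref{lem:CR-global}, so that w.h.p. every node receives $s$'s message within the $T$ phases. Each node $u \ne s$ then sets $\ell(u)$ to be the index of the phase in which it first receives a message (equivalently, the phase at which it joins $A$), and records the ID of the node $w$ from which this first message arrived; the source sets $\ell(s)=0$. Since each node locally counts phases and the transmitted packet carries the sender's ID, both $\ell(u)$ and the candidate parent are known to $u$ at the end of the run, as the distributed definition in \Cref{def:layering} requires.

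Next I would verify that $\ell$ is a valid layering. Property (a) holds because only $s$ is assigned layer $0$. For property (b), observe that in \Cref{alg:decay} a node joins $A$ only at the \emph{end} of a phase and begins transmitting in the next phase; hence if $u$ first receives from $w$ during phase $\ell(u)$, then $w$ was already active and transmitting in that phase, so $w$ joined $A$ in some phase $\le \ell(u)-1$, giving $\ell(w) < \ell(u)$. As $w$ is a neighbor of $u$ (it transmitted and $u$ received), $w$ serves as the required parent. The depth bound is immediate: every $\ell(u)$ is at most the phase count $T$, and because $\delta \ge \log\frac{n}{D}$ we have $D\log\frac{n}{D}/\delta \le D$, so $T = O(D + \log^2 n/\delta)$, matching the claimed depth.

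The heart of the argument, and the step I expect to require the most care, is the stretch bound, which is where \Cref{lem:CR-local} enters. Consider neighboring nodes $u$ and $v$ and let $r_u \le r_v$ be the rounds in which they respectively first become active. At the round just after $u$ becomes active, $u \in A$ while $v \notin A$, so \Cref{lem:CR-local} guarantees that w.h.p. $v$ receives a message within a further $\Theta(\log^2 n)$ rounds; thus $r_v - r_u = O(\log^2 n)$. Translating from rounds to phases (each phase is $\delta$ rounds) yields $|\ell(u)-\ell(v)| = O(\log^2 n/\delta) + 1$, and since $\delta \le \log^2 n$ the additive $1$ is absorbed, giving stretch $O(\log^2 n/\delta)$. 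Here one must also use that an active node stays active and keeps (randomly) transmitting in every subsequent round, which is exactly the behaviour of \Cref{alg:decay}. The only delicate point is the probabilistic bookkeeping: \Cref{lem:CR-local} holds w.h.p. per pair, so I would take a union bound over the $O(n^2)$ neighboring pairs, costing only a constant increase in the hidden exponent $\beta$.

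Finally, the round complexity follows by accounting: the protocol runs for $T$ phases of $\delta$ rounds each, i.e. $T\delta = \Theta\big(D(\log\frac{n}{D}+\delta)+\log^2 n\big) = O(D\delta + \log^2 n)$ rounds, again using $D\log\frac{n}{D} \le D\delta$. Combining the global reception guarantee of \Cref{lem:CR-global}, the per-pair stretch guarantee of \Cref{lem:CR-local}, and a final union bound, all three properties hold simultaneously w.h.p., completing the construction.
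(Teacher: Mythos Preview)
Your proposal is correct and follows essentially the same approach as the paper: run CR-Broadcast from $s$ with the stated $\delta$ and $T$, assign $\ell(u)$ as the first reception phase and the parent as the corresponding sender, then invoke \Cref{lem:CR-global} for depth and \Cref{lem:CR-local} for stretch. Your write-up is in fact more careful than the paper's own proof (you explicitly verify properties (a)--(b) of \Cref{def:layering}, justify the round-to-phase conversion, and discuss the union bound), but the underlying argument is identical.
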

\begin{proof} We run the CR-Broadcast algorithm with parameter $\delta$, $T=\Theta(D (\log \frac{n}{D} + \delta) + \log^2 n)/\delta$, $A={s}$ and $R=V\setminus {s}$. For each non-source node $v$ we then set $\ell(v)$ to be the smallest phase number in which $v$ receives a message, and the parent of $v$ to be the node $w$ from which $v$ receives this first message. \Cref{lem:CR-global} guarantees that indeed after $T\delta$ rounds all nodes are layered. The depth of the layering can furthermore not exceed the number of iterations $T = \Theta(D \delta + \log^2 n)/\delta$. The stretch part of the lemma follows from \Cref{lem:CR-local} which guarantees that two neighboring nodes receive their messages at most $O(\log^2 n)$ rounds and therefore at most $O(\frac{\log^2 n}{\delta})$ iterations apart. 
\end{proof}

Next we give the algorithm to refine the basic layerings of \Cref{lem:basicLayering} to a pseudo-BFS layering. \shortOnly{We present the algorithm but defer the correctness proof to the full version.} 

\begin{lemma}\label{thm:refine} Given a $d$-stretch layering $l$ with depth $D'$, the Layer Refinement Algorithm (LRA) computes a $5$-collision-free $O(d)$-stretch layering $l'$ with depth $O(D')$ in $O(d \log^2 n)$ rounds. 
\end{lemma}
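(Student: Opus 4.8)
The two properties we must synthesize---validity (every non-source node has a strictly lower neighbor) and $5$-collision-freeness (an $O(1)$-coloring separating differently-layered nodes within distance $2$)---pull in opposite directions, and recognizing this tension is the starting point of the plan. Validity forces a fine downward gradient at every node, whereas an $O(1)$ coloring forces the layering to take only a few distinct values inside every radius-$2$ ball: if a single node had many pairwise-close neighbors carrying distinct layer numbers, no constant palette could separate them. I would therefore build $l'$ in two stages---first coarsen $l$ to control local diversity, then repair validity inside the coarse cells without reintroducing diversity.

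For the coarsening, set $m(u)=\lfloor l(u)/2d\rfloor$, grouping every $2d$ consecutive layers of $l$ into one block. Because $l$ is $d$-stretch, any $u,v$ with $dist_G(u,v)\le 2$ satisfy $|l(u)-l(v)|\le 2d$, so their block indices differ by at most one; hence a node's block index already behaves like a $2$-stretch quantity and would be separated by the residue $m(u)\bmod 5$. The only defect is validity: a node whose $l$-value lies strictly above its block's floor may have its $l$-parent in the same block, so the bare block index is not a layering.

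To repair validity I would exploit the following structural fact: the $l$-parent pointers are strictly $l$-decreasing, so the parent path from any node stays inside its block until it crosses the block's lower boundary, and the last node of that path inside the block is adjacent to the block below. Call these boundary-adjacent nodes \emph{entry} nodes; every connected piece of a block contains one, and every node reaches one within at most $2d$ downward hops. I would then orient each block internally toward its entry nodes by a bounded-depth within-block search, set $l'(u)=(2d+1)\,m(u)+\mathrm{offset}(u)$ with $\mathrm{offset}(u)\in[0,2d]$ the depth of this orientation, and scale the blocks apart by more than the maximum offset so that $l'$ becomes a valid layering. Neighbours then differ by $O(d)$ in $l'$ (the $O(d)$-stretch claim) and the depth is $(2d+1)\cdot(D'/2d)+O(d)=O(D')$. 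The search propagates information only $O(d)$ hops, and each hop can be realized as one reliable local transmission costing $O(\log^2 n)$ rounds by the Decay guarantee of \Cref{lem:CR-local}, giving the claimed $O(d\log^2 n)$ round complexity.

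The hard part, and the place where the naive version of the above breaks, is the final coloring. The within-block offsets reintroduce many distinct $l'$-values, and a single high-degree node adjacent to an internal gradient can place up to $\Theta(d)$ differently-offset nodes pairwise within distance $2$, which no offset-based residue coloring can separate with a constant palette. The crux is therefore to carry out the validity repair so that nodes close in $G$ keep \emph{close} $l'$-values---that is, to maintain a bounded-local-diversity invariant stating that every radius-$2$ ball meets only $O(1)$ distinct $l'$-values---most plausibly by choosing the block cut points or the internal orientation adaptively rather than on a rigid grid, and by re-orienting such hub nodes downward. Once that invariant is in force the layering is locally a bounded window, the coloring $c(u)=l'(u)\bmod 5$ is collision-free, and the remaining stretch, depth, validity and timing bounds follow routinely; proving this invariant is the main obstacle I expect to fight.
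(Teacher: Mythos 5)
Your high-level architecture --- coarsen $l$ into blocks of $\Theta(d)$ consecutive $l$-layers, then relayer each block internally starting from its bottom --- matches the paper's LRA, and you have correctly identified the crux: the within-block relayering must keep $l'$ locally low-diversity, or no constant palette can separate close nodes. But you leave exactly that crux unresolved (``proving this invariant is the main obstacle I expect to fight''), and the paper's proof is essentially the resolution of it, so there is a genuine gap. The paper's mechanism is to make the within-strip relayering a \emph{hop-synchronous BFS wave}: it first identifies a coherent front of ``start-line'' nodes just above each strip's lower boundary, then runs CR-Broadcast from that front with $\delta=\Theta(\log^2 n)$ for $5d$ phases, so that by \Cref{lem:CR-local} the wave advances exactly one hop per phase and $l'(v)$ is determined by the phase in which $v$ is first reached. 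This makes $l'$ restricted to a strip $1$-Lipschitz, i.e.\ $|l'(u)-l'(v)|\leq dist_{G}(u,v)$, so any two interior nodes at distance at most $2$ with distinct $l'$ differ by at most $2$, and cycling just three colors $\{2,3,4\}$ via $l' \bmod 3$ separates them. Your alternative of defining the offset as the depth of a ``bounded-depth within-block search'' toward entry nodes is not obviously $1$-Lipschitz (depth along parent paths is not a graph distance), and your fallback suggestions (adaptive cut points, re-orienting hubs) are not worked out.

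A second, independent gap is the treatment of cross-block adjacencies. With your formula $l'(u)=(2d+1)\,m(u)+\mathrm{offset}(u)$ and a uniform coloring $c(u)=l'(u)\bmod 5$, two adjacent nodes straddling a block boundary have $l'$ values differing by $(2d+1)$ plus an offset difference, which can land on any residue modulo $5$; collision-freeness then fails precisely at the seams. The paper avoids this by spending two of its five colors on dedicated separator classes: \emph{boundary} nodes (color $0$, detected by receiving a message from a higher-$l$ node during a one-phase broadcast from a designated band of width $d$) and \emph{start-line} nodes (color $1$). Interior nodes of different strips are then automatically at distance at least $3$ because at least one boundary layer and one start-line layer lie between them, while two same-colored boundary (respectively start-line) nodes with different $l'$ originate from bands whose $l$-values differ by more than $4d$ and hence, by the $d$-stretch of $l$, are at distance greater than $3$. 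Without some such separator device your construction does not yield a $5$-collision-free layering even if the within-block Lipschitz invariant were established.
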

\fullOnly{\begin{proof}\shortOnly{[Proof of \Cref{thm:refine}]}
We first show that the $l'(.)$ layering achieved by the LRA algorithm is $5$-collision-free. Then, we argue that the stretch of $l'(.)$ is $O(d)$, and its depth is $O(D')$.

For the first part, we show that with high probability, for any two nodes $u$ and $v$ such that $l'(u) \neq l'(v)$ but $c(u) =c(v)$, the distance of $u$ and $v$ is at least $3$. Suppose $u$ and $v$ are such that $l'(u) \neq l'(v)$ but $c(u) \equiv c(v) \equiv a \pmod{5}$ for some $a \in \{0, 1, 2, 3, 4\}$. We first show the statement for the case where $a = 0$, i.e., when $u$ and $v$ are boundary nodes. Suppose the first message that $v$ and $u$ received in the boundary detection part were from nodes $w_1$ and $w_2$ respectively. Then we know that $\lceil \frac{l(w_1)}{d} \rceil \neq \lceil \frac{l(w_2)}{d} \rceil$. This is because, otherwise there would exist a $j$ such that $l(v), l(u) \in [5jd-d, 5jd]$ and thus, $l'(u) = l'(v)$ which is by assumption not the case. Since $\lceil \frac{l(w_1)}{d} \rceil \neq \lceil \frac{l(w_2)}{d} \rceil$, and $\lceil \frac{l(w_1)}{d} \rceil \equiv \lceil \frac{l(w_2)}{d} \rceil \equiv 1 \pmod{5}$, we can infer that $|l(w_1) - l(w_2)| > 4d$. Thus, $|l(v) - l(u)| >3d$. Noting that $l$ is a $d$-stretch layering, we conclude that the distance between $u$ and $v$ is greater than $3$.

Now consider the case where $a=1$, i.e., when $u$ and $v$ are start-line nodes. This case is similar to the $a=0$ case. In particular, we know that $\lceil \frac{l(u)}{d} \rceil \neq \lceil \frac{l(v)}{d} \rceil$. 
This is because otherwise, $u$ and $v$ would get the same $l'$ layer number, which would be a contradiction. Therefore, we can infer that $|l(u) - l(v)|>4d$, which shows that the distance between $u$ and $v$ is greater than $3$.

Finally, consider the case where $a \in \{2,3,4\}$. If $u$ and $v$ are in two different strips, then their distance is at least $3$ as they are separated at least by one boundary layer and one start-line layer. Suppose $u$ and $v$ are in the same strip. Since from \Cref{lem:CR-local} we know that the CR-Broadcast protocol with parameter $\delta=\Theta(\log^2 n)$ makes exactly one hop of progress in each phase, and as we cycle over colors $\{2,3,4\}$, the distance between two nodes of the same color in the same strip is at least $3$. 

\medskip
For the second part, we show that with high probability, the stretch of the $l'$-layering achieved by the LRA is at most $10d$ and it has depth $O(D')$. For the depth claim, note that the largest possible $l'$ layer number for boundary nodes is at most $2d(\lceil \frac{D'}{d} \rceil+1)+5d \leq 2D'+7d =O(D')$. For the stretch part, note that the difference between the $l$ layers of two consecutive boundary $l'$-layers is exactly $10d$. 
Now note that, any two neighboring nodes are within two consecutive boundary layers (including the boundary layers themselves). Thus, the difference between $l'$ layers of each two neighbors is at most $10d$ which means that the stretch of layering $l'$ is at most $10d$.
\end{proof}
}

\paragraph{Layer Refinement Algorithm (LRA):} Throughout the presentation of the algorithm, we refer to Figure \ref{fig:LRA} as a helper tool and also present some intuitive explanations to help the exposition. 

As the first step of the algorithm, we want to divide the problem into small parts which can be solved in parallel. For this purpose, we first run the CR-broadcast protocol with parameters $T=1$, $\delta=\Theta(\log^2 n)$, $A$ equal to the set of nodes $u$ such that $\lceil \frac{l(u)}{d} \rceil \equiv 1 \pmod{5}$, and $R=\emptyset$. Each node $u \in A$ sets message $\mu_u$ equal to $l(u)$. In Figure \ref{fig:LRA}, these nodes are indicated by the shaded areas of width $d$ layers. Since layering $l$ has stretch at most $d$, each shaded area cuts the graph into two non-adjacent sets, above and below the area (plus a third part of the shaded area itself). After these transmissions, each node $v$ becomes a \emph{boundary node} if during these transmissions, $v$ was not transmitting but it received a message from a node $w$ such that $l(w)> l(v)$. In Figure \ref{fig:LRA}, the boundary nodes are indicated via red contour lines. These boundaries divide the problem of layering into strips each containing at most $5d$ layers, and such that two nodes at different strips are not neighbors. For each boundary node $v$, we set $l'(v) = 2d(\lceil \frac{l(v)}{d} \rceil+1)$ and color it with color $0$, i.e., $c(v)=0$.

\begin{figure}[t]
	\centering
		\includegraphics[width=0.65\textwidth]{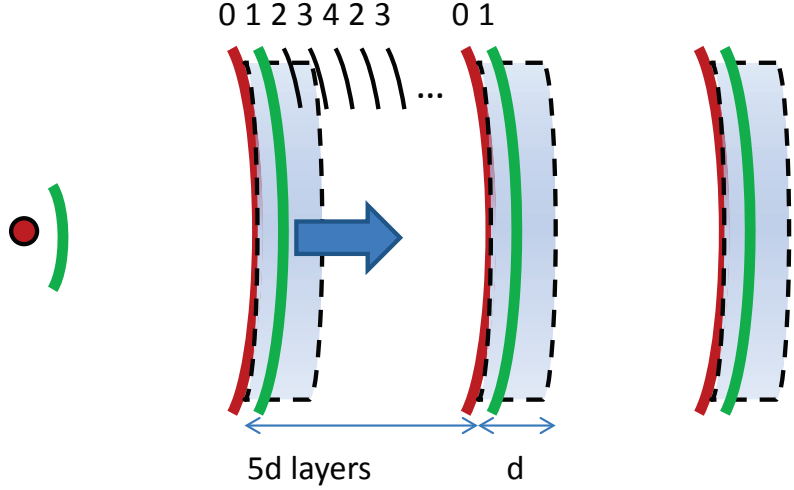}
	\caption{Layer Refinement}
	\label{fig:LRA}
\end{figure}

Next, we indicate the direction starting from the boundary which moves in the increasing direction of layer numbers $l$. For this, we run the CR-broadcast protocol with parameters $T=1$, $\delta=\Theta(\log^2 n)$, $A$ equal to the set of boundary nodes, and $R =\emptyset$, where each boundary node $u$ sets $\mu_u$ equal to $(l(u), l'(u))$. A non-boundary node $v$ that receives a message from a boundary node $w$ such that $l(w)<l(v)$ is called a \emph{start-line} node. In Figure \ref{fig:LRA}, start-line nodes are indicated via green contour lines. Every such node $v$ sets it $l'$-layer number to $l'(v) = l'(w)+1$ and its color $c(v)=1$, and records $w.id$ as the id of its parent.

Next, we assign $l'$ layer numbers to nodes inside the strips, starting from the start-line nodes and moving upwards in ($l$-layer numbers) till reaching the next layer of boundary nodes. This is done for different strips in parallel, using the CR-Broadcast protocol with parameter $T=5d$, $\delta=\Theta(\log^2 n)$, $A$ equal to the set of start-line nodes and $R$ equal to the set of nodes that are neither boundary nor start-line. As a result, in each phase of the CR-Broadcast, all non-boundary nodes that have received an $l'$ layer number by the start of that phase try transmitting their $l'$ layer number and their id. In every phase, a node $v$ that does not have an $l'$ layer number yet and receives a transmission from a node $w$ records $w.id$ as the id of its parent and sets its $l'$-layer number $l'(v)=l'(w)+1$ and $c(v)=2+((c(w)+1) \mod{3})$. 
In other words, the color number is incremented every time modulo $5$, but skipping colors $0$ and $1$ (preserved respectively for boundary and start-line nodes).  In Figure \ref{fig:LRA}, the numbers at the top part indicate these color numbers. From \Cref{lem:CR-local}, we get that the wave of the layering proceeds exactly one hop in each phase. Since in each phase, only nodes that do not have an $l'$ layer get layered, the waves of layering stop when they reach boundary nodes. Finally, each boundary node $v$ records the id of the node $w$ from which $v$ hears the first message as the id of its parent.

Next, we present the proof of \Cref{crl:layer} which uses the Layer Refinement Algorithm (LRA) on top of the basic layering provided by \Cref{lem:basicLayering}.

\begin{proof}[Proof of \Cref{crl:layer}] If $D< n^{0.1}$, we construct a basic layering with stretch $O(\log n)$ and depth $O(D+ \log n)$ in $O(D\log n + \log^2 n)$ rounds by using \Cref{lem:basicLayering} with parameter $\delta=\Theta(\log n)$. Then, we use the LRA to get to an $O(1)$-collision-free layering with depth $O(D + \log n)$ in additional $O(\log^3 n)$ rounds (\Cref{thm:refine}). The total round complexity becomes $O(D\log n + \log^3 n) = O(D \log \frac{n}{D} + \log^3 n)$.

If $D\geq n^{0.1}$, we construct a basic layering with stretch $O(\log^2 n/\delta)=O(\log^2 n)$ and depth $O(D + \log^2 n / \delta) = O(D)$ in $O(D\log \frac{n}{D} + \log^2 n)$ rounds by using \Cref{lem:basicLayering} with parameter $\delta = \log{\frac{n}{D}}$. Then, we use the LRA to get to an $O(1)$-collision-free layering with depth $O(D)$, in additional $O(\log^4 n)$ rounds. The total round complexity becomes $O(D\log \frac{n}{D} + \log ^2 n +  \log^4 n) = O(D \log{\frac{n}{D}})$.

In both cases the round complexity is $O(D \log{\frac{n}{D}} + \log^3 n)$ and the depth is $O(D + \log n)$.
\end{proof}

\subsubsection{Reducing the round complexity to $O(D \log{\frac{n}{D}} + \log^{2+\eps} n)$} 
The construction time in \Cref{crl:layer} is asymptotically equal to the broadcast time $T_{BC}$, for all values of $D =\Omega(\log^2 n)$. Here we explain how to achieve an almost optimal round complexity for smaller $D$ by reducing the pseudo-BFS construction time to $O(D \log{\frac{n}{D}} + \log^{2+\eps} n)$ rounds, for any constant $\eps>0$. 

\medskip\noindent\textbf{Recursive Layering Refinement Algorithm:} In the LRA algorithm, we used the CR-Broadcast protocol with parameter $\delta=O(\log^2 n)$ to refine the layering numbers inside each strip, in $O(\log^3 n)$ rounds. The key change in RLRA is that, we perform this part of refinement in a faster manner by using a recursive refinement algorithm with $O(1/\eps)$ recursion levels. We remark that, this speed-up comes at a cost of a $2^{O(1/\eps)}$ factor increase in the depth and $O(1/\epsilon)$ factor increase in the round complexity, and also in using $O(1/\epsilon)$ colors (instead of just $5$), for the final layering. However, since we assume $\eps$ to be constant, these costs do not effect our asymptotic bounds. 

Let $r=\ceil{1/\epsilon}$ and $\tau=\alpha \log^{\frac{1}{r}} n$ for a sufficiently large constant $\alpha$. In the $i^{th}$ level of recursion, we get an algorithm $A_{i}$ that layers a graph with depth $\tau^i$ using $2i+1$ colors, in $i \cdot \Theta(\log^{2+\frac{1}{r}})$ rounds. 

For the base case of recursion, algorithm $A_{1}$ is simply using the CR-Broadcast algorithm with parameter $\delta=\Theta(\log^2 n)$, and $T= \tau$ phases. Then, we assign layer numbers $\ell_1()$ based on the phase in which each node receives its first message, and set $c(v)=\ell_1(v) \pmod{3}$.

We get algorithm $A_{i}$ using algorithm $A_{i-1}$ as follows: First, use the CR-Broadcast algorithm with parameter $\delta=\Theta(\log^{2-\frac{i-1}{r}})$ and $T=\tau^i$ phases. From this broadcast, we get a layering $\ell^*$ that has stretch at most $d_i=\Theta(\log^{\frac{i-1}{r}} n) \leq \delta^{i-1}/5$. Then, using this layering, similar to the LRA, we break the graph into $\Theta(\delta)$ strips which each contain $\Theta(\delta^{i-1})$ layers. It is easy to see that, each strip has depth at most $\Theta(\delta^{i-1})$. Next, we determine boundary and start-line nodes as in the LRA and layer and color them. In particular, we assign color $2i+1$ to the boundaries of these strips and set their layer number $\ell_{i}(v)=2\delta^{i-1}(\ceil{\frac{\ell^*(v)}{\delta^{i-1}}}+1)$. Moreover, we assign color $2i$ to the start-lines of these strips and layer each start-line node $v$ with $l_{i}(v)=l_{i}(w)+1$, where $w$ is the first boundary node from which $v$ receives a message. Inside each strip, which is a graph with depth $\delta^{i-1}$, we use algorithm $A_{i-1}$ with colors $1$ to $2(i-1)+1=2i-1$. 

Following $r$ recursion steps, we get algorithm $A_{r}$, which layers a graph with depth $\tau^r = \Theta(\log n)$ using $2r+1=O(r)$ colors, in $r \cdot \Theta(\log^{2+\frac{1}{r}} n) = \Theta(\log^{2+\epsilon} n)$ rounds. In the LRA, if we substitute the part that layers each strip in $\Theta(\log^3 n)$ rounds with $A_{r}$, we get the recursive layering refinement algorithm.


\begin{proof}[Proof of \Cref{thm:layer}] For the case where $D\geq n^{{0.1}}$, we simply use the LRA algorithm and calculations are as before. For the case where $D < n^{0.1}$, the proof is similar to that of \Cref{crl:layer} with the exception of using the Recursive Layering Refinement Algorithm instead of the LRA. 
\end{proof}

%

\section{Gathering}\label{sec:gather}
In this section, we present a $k$-message gathering algorithm with round complexity $O(T_{Lay} + k)$. This round complexity is near optimal as $k$-message gathering has a lower bound of $T_{BC}+ k$. The additive $k$ term in this lower bound is trivial. The $T_{BC}$ term is also a lower bound because the lower bounds of single-message broadcast extend to single-message unicast from an adversarially chosen source to an adversarially chosen destination, and single-message uni-cast is a special case of $k$-message gathering where $k$=1.

\begin{theorem}\label{thm:gathering-total}
There is a distributed randomized algorithm that, w.h.p., gathers $k$ messages in a given destination node in $O(T_{Lay} + k)$ rounds.
\end{theorem}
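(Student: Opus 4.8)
The plan is to reduce gathering to pipelined routing along the tree induced by a pseudo-BFS layering rooted at the destination. First I would invoke \Cref{thm:layer} to build, in $T_{Lay}$ rounds, an $O(1)$-collision-free layering whose source is the given destination $s$; this yields depth $L = O(D+\log n)$, a color $c(v)\in\{0,\dots,C-1\}$ with $C=O(1)$ at every node, and a parent pointer from each non-source node to a neighbor in a strictly smaller layer. Since $L = O(D+\log n) = O(T_{Lay})$, it suffices to push all $k$ messages up the parent tree to $s$ in $O(L+k)$ further rounds, which makes the total $O(T_{Lay}+k)$.

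For the routing phase I would run a time-division schedule of period $C$: in the slot assigned to color $j$, every node of color $j$ that currently holds a message transmits the head of a FIFO queue (its own initial messages together with everything forwarded up by its children), and all other nodes listen. The collision-free coloring is used precisely to eliminate \emph{inter}-layer interference: if a listening node $u$ hears transmissions in color-$j$'s slot, the transmitters are all neighbors of $u$ and hence pairwise within distance $2$, so by collision-freeness they lie in a single layer. Thus a receiver only ever contends with transmitters of one fixed layer, and a node's upward transmission is overheard by its children, which I would use as an \emph{implicit acknowledgement} letting a child dequeue a message once it observes the parent forwarding it.

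The crux is the \emph{intra}-layer contention that collision-freeness does not forbid: a node $v$ can have many children sharing both its active color and a common layer, and these collide at $v$ whenever more than one transmits. This matters most at $s$ and at any node through which a large batch of messages must funnel, because a naive decay-style backoff would cost $\Theta(\log n)$ rounds per successful reception and produce an unacceptable $\Theta(k\log n)$ term. To recover the clean additive $k$, I would have each node adaptively estimate the contention among its ready same-color children (a one-time $O(\log n)$ probing per contention scale) and then transmit at the matching probability, so that once a tree edge is primed it sustains throughput $\Theta(1)$ message per $O(1)$ rounds; since collision-freeness makes the instances at different layers of a given color mutually non-interfering, these local schedules run simultaneously across the whole graph.

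The main obstacle is the analysis tying this into an $O(L+k)$ bound that holds with high probability rather than merely in expectation. I would frame it as a queueing/potential argument: track $\sum_{\text{messages}}(\text{remaining layers to } s)$ and show that, amortized over the run, the total number of wasted (collision or idle) slots is $O(L+k)$, charging the $O(\log n)$ estimation overheads into the $T_{Lay}$ term (there are only $O(1)$ contention scales and $O(1)$ colors) and charging each useful reception to the message it advances. The delicate points are bounding queue buildup so the pipeline neither starves nor stalls under an adversarial initial placement, and driving the failure probability below $1/\poly(n)$ via a Chernoff bound over the $O(1)$ independent colored subproblems together with a union bound over layers and time; this is where I expect the real work to lie.
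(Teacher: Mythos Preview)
Your high-level decomposition matches the paper exactly: invoke \Cref{thm:layer} to build a pseudo-BFS layering rooted at the destination in $T_{Lay}$ rounds, then route all messages up the parent tree using the color schedule to eliminate inter-layer interference; since the depth $D'=O(D+\log n)$ and an extra $\log^2 n$ term are both $O(T_{Lay})$, all that remains is gathering atop the layering in $O(D'+k+\log^2 n)$ rounds, which is precisely the paper's \Cref{thm:gathering}.

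Where you diverge, and where the proposal has a genuine gap, is the mechanism for intra-layer contention. Two of your ingredients break as stated. First, the implicit acknowledgment: a child $u$ with parent $v$ may have another neighbor $w$ with $c(w)=c(v)$ (collision-freeness only forces $\ell(w)=\ell(v)$, not that $w$ is silent), and if $w$ transmits in $v$'s slot then $u$ hears a collision and cannot tell whether $v$ forwarded its packet; this causes either duplication or unbounded retries. Second, the ``$O(1)$ contention scales'' claim is unjustified: the number of ready same-color children at a receiver ranges over $[1,n]$, giving $\Theta(\log n)$ scales, and re-estimating as queues drain does not obviously fit inside $T_{Lay}$. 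More broadly, you correctly flag that composing per-node backoff protocols across all layers into an $O(D'+k)$ pipeline is ``where the real work lies,'' but nothing in the sketch closes it; local decay-style schemes that achieve constant throughput at a single receiver do not automatically pipeline without a multiplicative $\log$-factor loss.

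The paper sidesteps local contention resolution entirely with a \emph{global, per-message} device. Each message independently draws a random delay $\delta\in[8k]$ and is transmitted by node $u$ only in epoch $D'-\ell(u)+\delta$ of $u$'s color cycle, with an \emph{explicit} one-round acknowledgment from the parent (the paper argues these acks cannot themselves collide at the sender). Because delays are drawn from a range eight times the number of messages, a fixed message is conflict-free along its entire path with probability at least $7/8$. Messages that fail join a second wave with delays from $[8k/2]$, then $[8k/4]$, and so on; Chernoff shows the active set halves in each wave while it exceeds $\Theta(\log n)$, so the waves have geometrically shrinking widths and can be pipelined back-to-back, yielding $C\cdot O(D'+k+\log^2 n)$ rounds. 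The key idea you are missing is this per-message random time-spreading, which dissolves the intra-layer contention problem rather than solving it node by node.
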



The result follows from using the pseudo-BFS layering from \Cref{thm:layer} with the following lemma:

\begin{lemma}\label{thm:gathering}
Given a $C$-collision-free layering $\ell(.)$ with a $C$-coloring $c(.)$, depth $D'$, and source node $s$, \Cref{alg:gatheringCF} gathers $k$ messages in $s$ with high probability, in $C \cdot \Theta(D' + k + \log^2 n)$ rounds. 
\end{lemma}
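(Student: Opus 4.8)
The plan is to read \Cref{alg:gatheringCF} as a pipelined convergecast toward the source $s$ along the parent tree induced by the layering, with the $C$-coloring serving only as a collision-avoiding transmission schedule. I organize time into super-rounds of $C$ consecutive rounds, the $a$-th of which activates exactly the color class $\{v : c(v)=a\}$; an active node forwards the head of its message queue toward its parent (a strictly lower layer). The first step is the \textbf{separation property}: in a round activating color $a$, for any node $w$ all of $w$'s transmitting neighbors lie in one layer — two such transmitters are both adjacent to $w$, hence at distance $\le 2$, and share color $a$, so $C$-collision-freeness forces them into the same layer. Hence transmissions from distinct layers never collide at a common receiver, and a message that $u$ sends toward $p(u)$ can be lost only to a collision with another color-$a$ transmitter adjacent to $p(u)$, which must lie in $u$'s own layer. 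The global schedule thus decomposes into independent per-layer contention problems at each receiver.

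Next I turn local contention into progress. Letting the per-color transmission probabilities follow the decay schedule of the $BC$ sequence, \Cref{lem:CR-local} applies at every receiver: whenever $w$ has a contending child holding a message, $w$ receives exactly one such message within $\Theta(\log^2 n)$ rounds w.h.p. I use this both for correctness — track the potential $\Phi = \sum_{\text{messages}} \ell(\cdot)$, which never increases since messages only move to strictly smaller layers and reaches $0$ precisely when every message sits at $s$ — and to bound the time a node can stall before advancing its head-of-queue message. Applied naively, however, this guarantee costs $\Theta(\log^2 n)$ per hop, and over the up-to-$kD'$ hops it would be \emph{multiplicative}; the entire content of the lemma is that this cost should appear only \emph{additively}, as $C\cdot O(\log^2 n)$.

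To obtain the additive bound I run a pipelining argument on the parent tree. Once the pipeline is full, $s$ sits at the receiving end of a continuous stream whose incoming traffic, by the separation property, comes from one layer at a time; a work-conserving tree convergecast delivers one fresh message to the root in every $O(1)$ super-rounds as long as undelivered messages remain, giving the term $C\cdot O(D' + k)$, with $D'$ for filling the pipeline and $k$ for draining it. The genuinely wasted rounds are exactly those in which some node with nonempty downstream demand receives nothing because of an unresolved collision; I charge each such stall to the local-progress event of \Cref{lem:CR-local} and argue that, across the whole execution, these total only $C\cdot O(\log^2 n)$ rounds rather than $O(\log^2 n)$ per hop. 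Summing the three contributions yields $C\cdot\Theta(D' + k + \log^2 n)$, and a union bound over the polynomially many local-progress events lifts the estimate to high probability.

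The main obstacle is precisely this amortization. I expect the crux to be a potential function that couples the remaining layer-distance of the messages with the progress of the decay schedule, so that a stall at one node can be charged against genuine downward motion made elsewhere in the tree, forcing stalls to be globally rare rather than per-hop. Making this charging argument hold with high probability — while keeping the $D'$, $k$, and $\log^2 n$ contributions cleanly separated and correctly scaled by the schedule length $C$ — is where the real work lies; the separation property and \Cref{lem:CR-local} are the comparatively routine ingredients.
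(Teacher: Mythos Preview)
Your proposal does not analyze \Cref{alg:gatheringCF}; it analyzes a different algorithm. In \Cref{alg:gatheringCF} there is \emph{no} probabilistic transmission and \emph{no} use of the $BC$ sequence or Decay: a node $u$ transmits packet $\pi$ deterministically in the unique epoch satisfying $\text{epoch} = D' - \ell(u) + \pi.\text{delay}$, and what is random is the per-message \emph{delay} $\delta$, drawn once from an interval of size $\Theta(\max\{2^{-\text{wave}}k,\log n\})$. With delays fixed, a message traverses the parent tree as a single wave whose transmission at layer $l$ occurs at epoch $D'-l+\delta$; two messages can only contend if their delays coincide (same wave, same offset), and an acknowledgment/retry step reschedules failed packets into the next wave with a fresh delay. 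The paper's proof then shows that with delays drawn from $[8k']$, where $k'$ is the number of still-active messages, each message reaches $s$ in that wave with probability at least $7/8$ \emph{independent of its path length}; a Chernoff bound halves $k'$ per wave until $k' = O(\log n)$, after which $O(\log n)$ further waves of size $\Theta(\log n)$ finish the job. The geometric series over wave lengths gives $O(k)$ epochs, the tail gives $O(\log^2 n)$ epochs, and the $D'$ term is the wave travel time --- all additive, all scaled by $C$. None of this involves \Cref{lem:CR-local}.

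Even granting your alternative decay-based scheme as a substitute algorithm, the argument has a real gap. Your claim that ``a work-conserving tree convergecast delivers one fresh message to the root in every $O(1)$ super-rounds'' is unjustified: siblings in the same layer contending for a common parent are precisely the same-color, distance-$2$ pairs that $C$-collision-freeness does \emph{not} separate, and resolving that contention via Decay costs $\Theta(\log^2 n)$ rounds per delivery at that parent in the worst case. You recognize this and defer to an amortized charging argument, but ``I expect the crux to be a potential function that couples\ldots'' is a statement of the difficulty, not its resolution. The random-delay mechanism in \Cref{alg:gatheringCF} is exactly what sidesteps this amortization: it decouples a message's success probability from the number of hops it must traverse, which is why the $\log^2 n$ term is additive rather than multiplicative. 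Your proposal neither uses that mechanism nor supplies a replacement for it.
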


\begin{algorithm}[t]
\caption{Gathering Algorithm @ node $u$}
\begin{algorithmic}[1]
\normalsize
\Statex Given: Layer $\ell(u)$, color $c(u)$, parent-ID $parent(u)$, a set of initial messages $M$
\Statex Semantics: each packet is 4-tuple in form (message, destination, wave, delay)
\Statex
\State $P \gets \emptyset$
\For {each message $m \in M$}
	\State Choose delay $\delta \in_{\mathcal{U}} [8\max\{2^{-wave} k, 4\log n\}]$
	\State Create packet $\tau \gets (m, parent(u), 0 , \delta)$ and add $\tau$ to $P$
\EndFor
\For {$epoch = 0$ to $\Theta(D' + 16k + \log^2 n)$}\Comment{Main Gathering Part}
	\For {$cycle = 1$ to $C$} 
		\If {$c(u)=cycle$}
		 	\If {$\exists$ exactly one $\pi \in P$ such that $epoch =  D' - \ell(u) + \pi.delay$} 
				\State \textsc{transmit} packet $\pi$
				\State \textsc{listen}
				\If {received acknowledgment}
					\State remove $\pi$ from $P$
				\EndIf
			\Else
				\State \textsc{listen}
				\State \textsc{listen}
			\EndIf	
			\For {$\pi \in P$ s.t. $epoch =  D' - \ell(u) + \pi.delay$} 
				\State Choose random delay $\delta' \in_{\mathcal{U}} [8\max\{k 2^{-wave-1}, 4\log n\}]$
				\State $MaxPreviousDelay \gets  \sum_{1 \leq i \leq wave} 8\max\{k 2^{-i}, 4\log n\}$
				\State remove $\pi$ from $P$
				\State $\pi'\gets (\pi.m,\pi.destination ,\pi.wave + 1, MaxPreviousDelay + \delta')$
				\State add packet $\pi'$ to $P$
			\EndFor
		\Else
			\State \textsc{listen}
			\If {received a packet $\sigma$ such that $\sigma.destination = ID(u)$}
				\State add packet $\sigma'= (\sigma.m,parent(u),\sigma.wave,\sigma.delay)$ to $P$
				\State \textsc{transmit} acknowledgment packet 
			\Else
				\State \textsc{listen}
			\EndIf
		\EndIf
	\EndFor
\EndFor
\label{alg:gatheringCF}
\end{algorithmic}
\end{algorithm}

The full algorithm is presented in \Cref{alg:gatheringCF}. Next, we give an intuitive explanation of its approach. \shortOnly{The formal arguments are deferred to the proof of \Cref{thm:gathering} in the full version.}
Consider the hypothetical scenario where simultaneous transmissions are not lost (no collision) and packet sizes are not bounded, i.e., a node can transmit arbitrary many messages in one round. Consider the simple algorithm where (1) each node $u$ transmits exactly once and in round $D' - \ell(u)$, where it transmits all the messages that it has received by then, (2) a node $v$ accepts a received packet only if $v$ is the parent of the sender. It is easy to see that this is like a wave of transmissions which starts from nodes at layer $D'$ and proceeds exactly towards source, one hop in each round. This wave sweeps the network in a decreasing order of the layer numbers and every message $m$ gets picked up by the wave, when the wave reaches the node that holds $m$ initially. Then, messages are carried by the wave and they all arrive at the source when the wave does, i.e., after exactly $D'$ rounds.  

Things are not as easy in radio networks due to collisions and bounded size messages; each node can only transmit one message at a time, and simultaneous transmissions destined for a common parent collide. We say that \emph{``transmission of message $m$ at node $u$ failed''} if throughout the progress of a wave, message $m$ fails to reach from node $u$ to the parent of $u$ because either (i) a collision happens at $u$'s parent, or (ii) $u$ has other messages scheduled for transmission in the same round as $m$. 
To overcome these, we use two ideas, namely $C$-collision-free layering $\ell()$ with coloring $c()$, and random delays. We use a $C$-collision-free layering by scheduling the transmissions based on colors. This takes care of the possible collisions between nodes of different layer numbers (at the cost of increasing round complexity to $C\cdot D'$). 

Even with the help of a $C$-collision-free layering, we still need to do something for collisions between the transmission of the nodes of the same layer\shortOnly{, and packets scheduled for simultaneous transmission from the same node.}\fullOnly{. Also, note that in the above simple algorithm, messages which their ancestry path to the source goes through a fixed node $v$ are all scheduled for transmissions at the same round in node $v$. This obviously results in transmission failures of type (ii).} The idea \fullOnly{to get over these two \emph{transmission failure} origins} is to add a random delay to the transmission time of each message. If there are $k$ active messages and we add a random delay chosen from $[8k]$ to each message, then for each message $m$, with probability at least $7/8$ no transmission of $m$ fails, i.e., the wave delivers $m$ to the source with probability at least $7/8$. A formal argument for this claim would be presented in the proof. With this observation, one naive idea would be to repeat the above algorithm on a $C$-collision-free layering $\ell()$, by having $\Theta(\log n)$ non-overlapping waves, where each time each message starts from the node that it got stuck in while being carried by the previous wave. With this, we succeed with high probability in delivering all $k$ messages to the source and in time $C \cdot O(D'\log n+ k\log n)$. 

Now there are two ideas to improve upon this. First, we can \emph{pipeline} the waves. That is, we do not need to space the waves $D'$ rounds apart; instead the spacing should be just large enough so that two waves do not collide. For that, a spacing of $8k$ between the waves is enough. With this improvement, we go down to time complexity of $C \cdot O(D' + k\log n)$.    
Second, note that in each wave, each message succeeds with probability at least $7/8$. Thus, using Chernoff bound, we get that as long as the number of remaining messages is $\Omega(\log n)$, whp, in each wave, the number of remaining messages goes down by at least a $\frac{1}{2}$ factor. Hence, in those times, we can decrease the size of the interval out of which the random delays are chosen by a factor of two in each new wave. Because of this, the spacing between the waves also goes down exponentially. This second improvement, with some care for the case where number of remaining messages goes below $\Theta(\log n)$ (where we do not have the Chernoff-type high probability concentration anymore) gives time complexity of $C \cdot O(D' + k + \log^2 n)$.     

\fullOnly{
Next we give the formal proof of \Cref{thm:gathering} which now be easy to understand:

\begin{proof}[Proof of \Cref{thm:gathering}] We first argue that each packet gets routed to the root of the layering eventually. In the absence of collisions this is true because the layer number of a receiving parent $w$ is always smaller than the one of the sender $u$ of a packet this node will retransmit the packet later to its parent. If on the other hand a collision prevents the parent $w$ from receiving a packet then $w$ will not acknowledge this packet to $u$ and $u$ will pick a new larger random delay for this packet and try again later. It is also good to see that packets do not get duplicated which would happen if packets arrive but their acknowledgments collide. This is not possible since if two acknowledgments from nodes $w$ and $w'$ collide at a node $u$ one of them must be for a transmission that came not from $u$ but all nodes connected to $u$ will either have received its message or a collision in the round before. 

With this in mind it is clear that at any point of time there is at most one node per message $m$ that is trying to send $m$ in a packet. Since we schedule transmissions according to the colors, in a $C$-collision-free layering we get the advantage that only transmissions from nodes in the same layer can interfere. That happens only if packets have the same delay value. Furthermore, the ranges of delay values that a node can have do not overlap. This guarantees that each packet might have conflict only with packets in the same wave. We show that in each wave, each message has an independent probability of at least $1/2$ to be collision-free. This shows that with high probability, all messages are delivered after at most $4\log n$ waves. Thus, considering the values of delays at each wave, we get that each message is with high probability delivered to the source after at most $C \cdot\Theta(D'+ k + \log^2 n)$ rounds.

In order to show that in each wave, each message is delivered to the source with probability at least $1/2$, we show by induction that the number of active packets is at most $1/8$ times the size of the range from which the random additional delays $\delta'$ are chosen. This is true in the beginning. Furthermore, this implies that if one fixes the delay choices of all packets except for one, at least $7/8$ random delay values will not result in a collision for this packet. This implies that each packet gets independently delivered with probability at least $7/8$. If $k 2^{-i} > 4\log n$, then in wave $i$, with high probability at least half of the messages succeed in being delivered to the source and thus, do not participate in the next wave. When $k 2^{-i} < \log n$, the delay values are at least $8 \cdot 4 \log n$ large, because of the max-expressions in lines 3 and 18. This proves the inductive step, completing the whole proof.
\end{proof}
}

\section{Multi-Message Broadcast, and Gossiping}\label{sec:bcast}

\begin{algorithm}[t]
\caption{Network-Coded Multi-Message Broadcast @ node u}
\begin{algorithmic}[1]
\normalsize
\Statex Given: Source node $s$ with $k$ messages
\Statex
\If{$u=source$}
	\For{ all $i \in [k]$}
		\State $v_i \gets (e_i,m_i)$    \Comment{$e_i \in \{0,1\}^k$ is the $i^{th}$ basis vector}
		\State put $v_i$ in $P$
	\EndFor
\Else
	\State $P \gets \emptyset$	
\EndIf
\Statex
\For{ $i = 1$ to $\Theta(D' \log \frac{n}{D'} + k \log n + \log^2 n)$}
	\For{ $cycle = 1$ to $C$}
		\If {$cycle \equiv c(u)$}
			\WithProb {$2^{-BC_{i \mod L}}$} \Comment{$BC$ is the Broadcast sequence from \Cref{sec:decaynew}}
				\State choose a uniformly random subset $S$ of $P$
				\State \textsc{transmit} $\bigoplus_{v \in S} v$
			\Otherwise
				\State \textsc{listen}
			\EndWithProb
		\Else
			\State \textsc{listen}
		\EndIf
		\If {received a packet $v$} add $v$ to $P$
		\EndIf	
	\EndFor
\EndFor
\Statex
\State decode $v_1, \ldots, v_k$ from $span(P)$ by Gaussian Elimination

\label{alg:NCBcast}
\end{algorithmic}
\vspace{-0.1cm}
\end{algorithm}

In this section we show how to combine psuedo-BFS layerings, the broadcat protocol of \Cref{sec:decaynew}, and the idea of random linear network coding to obtain a simple and optimal $O(T_{Lay} + k \log n)$ $k$-message broadcast algorithm. 
Note that the $\Omega(k \log n)$ lower bound of \cite{NCLB}, along with the $\Omega(D\log \frac{n}{D}+\log^2n)$ broadcast lower bound of \cite{KM} and \cite{ABLP}, show the near optimality of this algorithm.

\begin{theorem}\label{thm:Bcast} Given $k$ messages at a single source, there is a randomized distributed algorithm that broadcasts these $k$ messages to all nodes, w.h.p, in $O(T_{Lay} + k \log n)$ rounds.
\end{theorem}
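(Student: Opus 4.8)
The plan is to combine the pseudo-BFS layering from \Cref{thm:layer} with random linear network coding, using \Cref{alg:NCBcast}, and to argue correctness and running time separately. The algorithm first computes a pseudo-BFS layering with source $s$, incurring $O(T_{Lay})$ rounds and giving an $O(1)$-collision-free coloring $c(\cdot)$ with $C = O(1)$ colors and depth $D' = O(D + \log n)$. After this, the source holds $k$ coded vectors $v_i = (e_i, m_i)$, and every node repeatedly transmits a random $\mathbb{F}_2$-combination of the coded packets it currently holds, scheduling transmissions by color (as in gathering) so that only same-color nodes ever interfere. The transmission probabilities follow the $BC$-sequence of \Cref{sec:decaynew}, which is exactly what lets us borrow the Czumaj--Rytter broadcast progress guarantees layer-by-layer.

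First I would establish \emph{correctness}: a node can decode all $k$ messages once the span of its received packets reaches full rank $k$. Since each transmitted vector is a random subset-sum of the sender's current packets, a standard network-coding argument shows that whenever a node with rank $r < k$ successfully receives a collision-free packet from a neighbor whose span strictly contains the receiver's span, the receiver's rank increases by one with constant probability over $\mathbb{F}_2$. The key structural point is that $s$ initially has full rank $k$, and the $C$-collision-free coloring together with \Cref{lem:CR-local} guarantees that rank ``spreads'' from lower to higher layers: whenever a node is in $A$ (has rank $\geq 1$) and a neighbor is not yet full-rank, the latter receives a fresh packet within $O(\log^2 n)$ rounds w.h.p. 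Thus rank propagates outward hop by hop, and full rank eventually reaches every node.

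Next I would bound the \emph{running time}. This is where the $BC$-sequence properties (1)--(3) and the factor-$C = O(1)$ color-cycling matter. The goal is to show that after $O(D' \log \frac{n}{D'} + k \log n + \log^2 n)$ iterations every node has accumulated $k$ linearly independent packets. I would split the analysis by the structure of $BC$: property (1) drives the $O(D' \log \frac{n}{D'})$ term (a broadcast-type wavefront pushing \emph{some} information to depth $D'$), while the per-node rank must climb all the way to $k$, which requires $\Theta(k)$ successful independent receptions at each node, each succeeding with constant probability per ``good'' round. The delicate part is the density/frequency guarantee: property (3) ensures that every window of $\Theta(\log n)$ iterations contains a round in which a lone sender transmits at the right probability relative to a given receiver's in-degree, so that one of its neighbors gets through without collision. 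Summing the $\Theta(k)$ rank increments, each needing $O(\log n)$ iterations, yields the $O(k \log n)$ term; here the strengthened property (3) (rather than the original \cite{CR} property) is essential to avoid an extra $\log \frac{n}{D}$ factor, as the footnote in \Cref{sec:decaynew} warns.

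The main obstacle I anticipate is making the per-node rank-increase argument \emph{simultaneously} rigorous across all nodes and all $k$ dimensions while respecting collisions. Unlike single-message broadcast, here each node needs $k$ independent receptions, so I must argue that receptions at a node are ``useful'' (rank-increasing) a constant fraction of the time --- which over $\mathbb{F}_2$ requires the sender's span to not already be contained in the receiver's, and requires a union bound over the $O(n)$ nodes and $k$ dimensions to hold w.h.p. I would handle this by a potential-function / layered-induction argument on $\ell(\cdot)$: define for each node the number of additional independent packets it still needs, show the color-scheduling reduces the problem to non-interfering per-layer broadcasts, and invoke Chernoff over the $\Omega(\log n)$-length $BC$-windows to convert ``constant success probability per window'' into ``$\Theta(k + \log n)$ windows suffice w.h.p.'' The pipelining intuition --- that coded packets need not wait $D'$ rounds apart but can flow continuously --- is what collapses the naive $O(D' \cdot k)$ bound down to the additive $O(D' \log\frac{n}{D'} + k\log n)$, and formalizing that pipelining is the crux.
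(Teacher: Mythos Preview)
Your high-level plan --- compute a pseudo-BFS layering, then run color-cycled CR-Broadcast with random linear network coding as in \Cref{alg:NCBcast} --- matches the paper exactly. The gap is in the analysis. Your direct rank-tracking argument (``each node needs $k$ rank increments, each arriving within $O(\log n)$ iterations'') runs into the problem that a collision-free reception is rank-increasing only when the sender's span is \emph{not} contained in the receiver's span, and there is no a~priori nesting of spans as coded packets mix through the network. The paper sidesteps this via the \emph{projection analysis} of \cite{Haeupler11}: fix a single vector $\mu \in GF(2)^k$ and track the binary predicate ``$u$ knows $\mu$'' (i.e., $u$ holds some packet non-orthogonal to $\mu$). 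Then any collision-free transmission from a node that knows $\mu$ is useful for $\mu$ with probability exactly $1/2$, independent of all other randomness, and the multi-message problem reduces to $2^k$ copies of a single-bit spreading problem. The union bound is therefore over $n \cdot 2^k$ events --- not $n \cdot k$ as you suggest --- which forces the per-event failure probability down to $2^{-(k+\Theta(\log n))}$ rather than merely $1/\operatorname{poly}(n)$.

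Second, your ``pipelining'' step --- which you correctly flag as the crux --- is precisely where the paper deploys the \emph{backward} analysis of \cite{BCSTCD}. A forward layered induction of the kind you sketch tends to yield a multiplicative $O(D' \cdot k\log n)$ bound, because you cannot certify full rank at layer $i$ until layer $i-1$ has full rank. Instead, the paper fixes a target node $u$, a projection $\mu$, and the final time $T$, and walks \emph{backward} in time: from $u$ one greedily jumps to a lower-layer neighbor via a successful (collision-free and non-$\mu$-failing) transmission, and the total time to reach $s$ is dominated by $\sum_{l} FD(n_l)\,X_l$ with $X_l$ i.i.d.\ geometric. A tailored tail bound on this weighted geometric sum (\Cref{lem:tailbound}) then gives failure probability $\leq 2^{-(k+2\log n)}$ after $T=\Theta(D'\log\frac{n}{D'}+k\log n+\log^2 n)$ rounds, surviving the $2^k$-way union bound. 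Without the projection reduction and the backward traversal, your potential-function sketch does not supply an argument of this strength, and it is unclear how you would decouple the $D'$ and $k$ contributions additively.
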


The result follows from using the pseudo-BFS layering from \Cref{thm:layer} with the following lemma:

\begin{lemma}\label{lem:codedBC}
Given a $C$-collision free layering $\ell$ with depth $D'$ and $k$ messages at source $s$, the Network-Coded Multi-Message Broadcast algorithm delivers all messages to all nodes, w.h.p., in $C \cdot O(D' \log \frac{n}{D'} + k \log n + \log^2 n)$ rounds. 
\end{lemma}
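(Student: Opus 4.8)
The plan is to show that every node acquires a full-rank, $k$-dimensional packet space $span(P)$ within the allotted number of rounds, since the final Gaussian elimination then recovers $v_1,\dots,v_k$ and hence all $k$ messages at every node. First I would peel off the factor $C$ using the collision-free property exactly as the schedule does: each outer iteration cycles the inner loop through all $C$ colors, and in the cycle $cycle=c$ only color-$c$ nodes may transmit. If a listening node $w$ had two transmitting neighbors $u_1,u_2$ active in the same cycle, they would share a color and satisfy $dist_G(u_1,u_2)\le 2$, so by $C$-collision-freeness they would have to lie in the same layer, $\ell(u_1)=\ell(u_2)$. Hence collisions at any receiver occur only among same-layer neighbors, and the protocol behaves like $C$ time-multiplexed copies of the $BC$-indexed decay protocol in which each layer interface is isolated. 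Since each outer iteration costs exactly $C$ rounds, it suffices to bound the number of outer iterations by $O(D'\log\frac{n}{D'}+k\log n+\log^2 n)$.

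Next I would isolate the one fact that makes random linear network coding work over $\mathbb{F}_2$: when a node transmits $\bigoplus_{v\in S}v$ for a uniformly random subset $S\subseteq P$, the transmitted vector is the image of a uniform bit-vector under a surjective linear map and is therefore uniformly distributed over $span(P)$. Consequently, if $w$ receives such a packet from a neighbor $u$ without collision and $span(P_u)\not\subseteq span(P_w)$, then the received vector lands outside $span(P_w)$ — it is \emph{innovative} and strictly increases $\dim span(P_w)$ — with probability at least $1/2$, and this holds with probability at least $1/2$ independently over the fresh subset randomness of distinct transmissions. This reduces matters to a flow statement: as long as some node has rank below $k$, connectivity together with the source having rank $k$ guarantees a cut edge $(u,w)$ with $span(P_u)\not\subseteq span(P_w)$, across which every collision-free reception makes progress with constant probability.

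I would then combine this innovation bound with the density properties of $BC$ and the local-progress guarantee of \Cref{lem:CR-local} to control the \emph{rate} of useful receptions along the layering, via a two-dimensional wavefront induction over (distance from the source, rank). The first dimension propagates exactly like a single message, so Property~(1) of $BC$ advances the frontier one hop per $O(\log\frac{n}{D'})$ slots and, as in \Cref{lem:CR-global}, a node at distance $d$ attains positive rank after $O(d\log\frac{n}{D'}+\log^2 n)$ slots, reproducing $T_{BC}$. Once a node's lower-layer neighbor has positive rank, Property~(3) guarantees that in every window of $O(\log n)$ slots that neighbor transmits collision-free to it; since each such reception is innovative with probability at least $1/2$, a Chernoff bound shows that $O(k\log n+\log^2 n)$ further slots suffice for the node to climb from rank $1$ to rank $k$ w.h.p. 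Because the source starts at full rank and the waves pipeline — nodes forward random combinations of whatever they already hold rather than waiting — these two contributions add rather than multiply, giving $O(D'\log\frac{n}{D'}+k\log n+\log^2 n)$ slots per node; a union bound over the $n$ nodes and the $k$ rank-increments preserves the high-probability guarantee, and multiplying by $C$ yields the claim.

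The main obstacle I expect is precisely the pipelining step that keeps the extra cost at $k\log n$ rather than $k\log n\log\frac{n}{D'}$. Naively charging each of the $k$ dimensions a full single-message broadcast would cost $k\cdot T_{BC}$, and even charging $O(\log\frac{n}{D'})$ per hop per dimension would produce the inferior $\Omega(k\log n\log\frac{n}{D'})$ bound warned about in the footnote to the definition of $BC$. The delicate point is to argue that, behind the leading frontier, a node receives innovative packets from its parent at the steady rate of one per $O(\log n)$ slots — which is exactly what the strengthened Property~(3) buys — and that these receptions remain innovative because the sender's span keeps growing at least as fast, so that all $k$ dimensions stream through each interface concurrently. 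Making the wavefront induction and its Chernoff concentration rigorous while maintaining enough independence across the $D'$ hops and the $k$ increments is where the real work lies.
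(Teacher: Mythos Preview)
Your approach is genuinely different from the paper's. You track the \emph{rank} of $span(P)$ at each node and run a forward two-phase wavefront argument (reach rank $1$ via a broadcast wave, then climb to rank $k$ by accumulating innovative receptions). The paper instead uses the \emph{projection analysis} of \cite{Haeupler11}: fix a single vector $\mu\in GF(2)^k$ and show that every node eventually ``knows $\mu$'' (holds some packet not perpendicular to $\mu$) with failure probability at most $2^{-(k+2\log n)}$; a union bound over all $2^k$ projections and $n$ nodes then yields full rank everywhere. This is coupled with a \emph{backward} analysis in time: starting from the target node $u$ at round $T$ and moving toward $s$, one seeks a path of transmissions that are both collision-free and not $\mu$-failures. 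Each collision-free transmission is a $\mu$-failure with an \emph{independent} probability of at most $1/2$, so progress across each layer is governed by an i.i.d.\ geometric variable with constant success probability, and the dedicated tail bound \Cref{lem:tailbound} on $\sum_l s_l X_l$ (with $s_l=FD(n_l)$ and $\sum_l n_l\le n$) delivers the exponential-in-$k$ tail needed for the $2^k$-way union bound.

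The payoff of the paper's route is precisely that it dissolves the gap you yourself flagged as ``where the real work lies.'' In your forward rank-based argument the $1/2$ innovation bound applies only when $span(P_u)\not\subseteq span(P_w)$, so you must argue that along every interface the sender's span keeps outrunning the receiver's for $\Theta(k)$ innovative steps; you have not supplied this, and it is not obviously true (a child can receive from several neighbors and overtake any single parent's span). The projection analysis removes the need for any such rank comparison: once $\mu$ is fixed, ``knowing $\mu$'' is a monotone $0/1$ property that propagates exactly like a single message with one extra independent coin per hop, and the backward path construction never inspects anyone's rank. Your plan may be completable, but as written the pipelining step is the missing idea; the paper's projection-plus-backward-analysis is what actually secures the additive $k\log n$ without the extra $\log\frac{n}{D'}$ factor you were worried about.
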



The algorithm is presented in \Cref{alg:NCBcast}. The main ideas are as follows. To schedule which node is sending at every time, we first restrict the nodes that are sending simultaneously to have the same color. To resolve the remaining collisions, we let nodes send independently at random with probabilities chosen according to the CR-Broadcast protocol of \cite{CR} with parameter $\delta=\log \frac{n}{D'}$. Lastly, if a node is prompted to send a packet, we create this packet using the standard distributed packetized implementation of random linear network coding as described in \cite{Haeupler11}. Given such a random linear network code, decoding can simply be performed by Gaussian elimination (see \cite{Haeupler11}).

The proof uses several ideas stemming from recent advances in analyzing random linear network coding. The key part is the \emph{projection analysis} of \cite{Haeupler11} and its modification and adaption to radio networks \cite{BCSTCD}, titled \emph{backwards projection analysis}. This allows us to reduce the multi-message problem to merely showing that, for each particular node $v$, one can find a path of successful transmissions from the source to $v$ with exponentially high probability. The required tail-bound follows from a slightly modified analysis of the CR-Broadcast protocol \cite{CR}. We remark that the additive coefficient overhead in \Cref{alg:NCBcast}, which is one-bit for each of the $k$ messages, can be reduced to $O(\log n)$ bits using standard techniques explained in \cite{BCSTCD}. \shortOnly{The proof of \Cref{lem:codedBC} appears in the appendix.} 

Before going into details of the proof of \Cref{lem:codedBC}, we present a Chernoff-type concentration bound. 
This lemma is later used to bound the tail of the probability that there is a path of successful transmissions from the source to a particular node.

\begin{lemma}\label{lem:tailbound}
For any $D', n, k$ let $s_1,\ldots,s_{D'}$ be integers between $0$ and $n$ and let $X_i$ with $i \in [D']$ be i.i.d. geometric random variables with success probability $p$. We have:
$$P[ \sum_{i \in [D']} s_i X_i > \frac{2}{p}(\frac{1}{p} \sum_{i \in [D']} s_i + k/\max_i\{s_i\}) ] < (1 - p)^k$$
\end{lemma}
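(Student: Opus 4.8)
The plan is to prove this by the exponential moment (Chernoff) method, viewing $Y := \sum_{i\in[D']} s_i X_i$ as a sum of independent, weighted geometric variables and writing $M := \frac{2}{p}\big(\frac1p\sum_{i\in[D']} s_i + k/\max_i s_i\big)$ for the threshold. First I would recall that a geometric variable $X$ with success probability $p$ has moment generating function $E[e^{tX}] = \frac{pe^t}{1-(1-p)e^t}$, finite exactly when $(1-p)e^t < 1$. By independence of the $X_i$, for any admissible $t>0$ the Markov bound applied to $e^{tY}$ gives $P[\,Y > M\,] \le e^{-tM}\prod_{i} E[e^{t s_i X_i}] = e^{-tM}\prod_i \frac{pe^{t s_i}}{1-(1-p)e^{t s_i}}$. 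The entire argument then reduces to choosing $t$ well and showing this quantity is below $(1-p)^k$.

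The key step is a clean per-coordinate bound on the log-MGF. Setting $g(u) = \ln\frac{pe^u}{1-(1-p)e^u}$, one checks $g(0)=0$ and $g'(u) = \frac{1}{1-(1-p)e^u}$, which is increasing in $u$; hence $g(u)\le u\,g'(u) = \frac{u}{1-(1-p)e^u}$, and applying this with $u = ts_i$ yields $\ln E[e^{t s_i X_i}] \le \frac{t s_i}{1-(1-p)e^{t s_i}}$. I would then take $t$ just below the singularity of the MGF, namely $t = \frac{-\ln(1-p)}{2\max_i s_i}$, so that $(1-p)e^{t s_i} \le (1-p)^{1/2} < 1$ for every $i$ (using $s_i \le \max_i s_i$). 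This bounds every denominator uniformly, and since $\frac{1}{1-(1-p)^{1/2}} = \frac{1+(1-p)^{1/2}}{p} \le \frac{2}{p}$, summing the per-coordinate bounds gives $\sum_i \ln E[e^{t s_i X_i}] \le \frac{2t}{p}\sum_i s_i$, with no dependence on $D'$ beyond the weights themselves.

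Combining the pieces, $\ln P[\,Y>M\,] \le -t\big(M - \tfrac{2}{p}\sum_i s_i\big)$, and the form of $M$ is calibrated precisely so that subtracting the mean-scale term $\frac2p\sum_i s_i$ leaves a nonnegative multiple of $\sum_i s_i$ plus the surviving $k$-term; multiplying that surviving term by the chosen $t$ is designed to deliver the exponent $k\ln(1-p)$, whence $P[\,Y>M\,] < (1-p)^k$. I expect the main obstacle to be exactly this choice of $t$: it must lie strictly below the blow-up threshold $\frac{-\ln(1-p)}{\max_i s_i}$ so that all MGF factors remain finite and uniformly controlled, yet be large enough that the additive $k$-contribution in $M$ generates the full $k\ln(1-p)$ in the exponent. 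Balancing these competing demands is where the factor $2$ and the exact shape of the threshold are used, and care is needed because the weights $s_i$ can differ greatly in magnitude, forcing the per-coordinate denominator bound to be expressed uniformly through $\max_i s_i$ rather than through the individual $s_i$.
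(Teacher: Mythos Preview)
The paper states this lemma without proof, so there is nothing to compare against on that front. More importantly, the lemma \emph{as literally written} is false, so no proof of it can succeed. Take $D'=1$, $s_1=100$, $p=\tfrac12$, $k=1000$. The threshold is $\tfrac{2}{p}\bigl(\tfrac{1}{p}\cdot 100 + 1000/100\bigr)=4\cdot 210=840$, so the claim reads $P[100X>840]=P[X\ge 9]=(1/2)^{8}<(1/2)^{1000}$, which is absurd. The factor $k/\max_i s_i$ is almost certainly a typo for $k\cdot\max_i s_i$; this is also how the bound is actually used later in the paper, where $\max_l s_l=\Theta(\log n)$ and the $k$-contribution to the threshold is additive of order $k$, not $k/\log n$.

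Your exponential-moment approach is exactly the right one, and every step through
\[
\sum_i \ln E\!\bigl[e^{ts_iX_i}\bigr]\ \le\ \frac{2t}{p}\sum_i s_i,
\qquad t=\frac{-\ln(1-p)}{2\max_i s_i},
\]
is correct (the inequality $g(u)\le u\,g'(u)$ follows from convexity of $g$ and $g(0)=0$, as you indicate). The gap is precisely in your last sentence, where you assert that multiplying the surviving $k$-term by $t$ ``delivers the exponent $k\ln(1-p)$.'' It does not: the surviving term in $M-\tfrac{2}{p}\sum_i s_i$ is $\tfrac{2}{p}\cdot\tfrac{k}{\max_i s_i}$, and
\[
t\cdot\frac{2k}{p\,\max_i s_i}
\;=\;\frac{-k\ln(1-p)}{p\,(\max_i s_i)^{2}},
\]
which misses the target by a factor of $p(\max_i s_i)^2$. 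This is not a fixable slip in your argument; it is exactly the place where the stated inequality breaks, as the counterexample shows. If instead the threshold carried $k\cdot\max_i s_i$ in place of $k/\max_i s_i$, your computation closes cleanly: then $M-\tfrac{2}{p}\sum_i s_i\ge \tfrac{2}{p}\,k\max_i s_i\ge 2k\max_i s_i$, and $-t\bigl(M-\tfrac{2}{p}\sum_i s_i\bigr)\le k\ln(1-p)$ on the nose. So your proposal is correct for the intended (corrected) statement, and the obstruction you hit is the paper's typo, not your method.
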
 

\begin{proof}[Proof of \Cref{lem:codedBC}]

We interpret all messages and all packets used in the algorithm as (bit) vectors over the finite field $GF(2)$. With this, all packets created in step $3$ have the form $(\mu,m_\mu) = (\mu_1,\ldots,mu_k,\sum_{i \in [k]} \mu_i m_i) \in GF(2)^{k+l}$ where $l$ is the size of a message. Since we only XOR (or equivalently add) packets of this form during the algorithm and since $(\mu,m_\mu) \oplus (\mu',m_{\mu'}) = (\mu + \mu',m_{\mu + \mu'})$ this invariant is preserved throughout. Also, if a node receives $k$ messages $(\mu_1,m_{\mu_1}), \ldots, (\mu_k,m_{\mu_k})$ in which the vectors $\mu_1,\ldots, \mu_k$ are independent and span the full $k$ dimensional space $GF(2)^k$ then all messages can be recovered by Gaussian elimination. This allows us to solely concentrate on the spreading of the coefficient parts through the network. The goal of the rest of this proof is thus to show that these vectors spread such that in the end all nodes receive the full coefficient space with high probability.

Instead of tracking the coefficient vectors themselves, we follow \cite{Haeupler11} and look at their projections. More precisely, we say a node $u$ knows about a projection vector $\mu \in GF(2)^k$ if the projection of its received packets onto $\mu$ is non-zero, that is, if $u$ has at least one packet $(\mu',m_{\mu'}) \in P_u$ with a non-perpendicular coefficient vector $\mu'$ (i.e, $\left\langle \mu, \mu' \right\rangle \neq 0$). Our main claim is that for every node $u$ and every projection vector $\mu$ the probability that after $T = \Theta(D' \log \frac{n}{D'} + k \log n + \log^2 n)$ rounds node $u$ does not know about $\mu$ is at most $2^{-(k + 2\log n)}$. A union bound over all $2^k$ vectors and all $n$ nodes then shows that with high probability every node knows about every projection vector. From this one can easily conclude that every node can decode all messages due to having received vectors that span the full coefficient space (a lower dimensional span would directly give a perpendicular and thus unknown projection).

To prove the main claim we focus on one node $u$ and one projection vector $\mu$. With $\mu$ fixed we define a transmission to be a $\mu$-failure iff the node node sending it knows $\mu$ but the packet in the transmission carries a coefficient vector that is perpendicular to $\mu$. It is easy to see that in order for node $u$ to know $\mu$ in the end it is necessary and sufficient that there is a sequence of transmissions starting from the source $s$ and ending at $u$ in which each transmission is both collision free and not a $\mu$-failure. Furthermore, these two types of failures, collisions and $\mu$-failures, are independent in the sense that any collision free transmission is a $\mu$-failure with an independent probability of at most $1/2$: If, on the one hand, the sender $u$ in a transmission has zero packets in $P_u$ that are non-perpendicular to $\mu$ then it does not know $\mu$ and cannot fail a transmission. On the other hand, if the receiver has at least one packet in $P$ with a coefficient vector that is non-perpendicular to $\mu$ then the probability that the parity of the number of these packets that are included in $S$ is even, which is what is needed for a $\mu$-failure, is exactly $1/2$. We call a transmission successful if it is both collision free and not a $\mu$-failure. 

To find the desired transmission sequence we employ the backward analysis introduced by Ghaffari et al. in~\cite{BCSTCD}. For this, instead of constructing a sequence from $s$ to $u$, we go backwards in time starting at round $T$ and try to find a transmission sequence from $u$ to $s$. In particular, for $t$ decreasing from $T$ to $1$ we say that at time $t$ we have progressed up to layer $l$ if $l$ is the smallest layer number such that there is a node $v$ at layer $l = l(v)$ for which there exists a sequence of successful transmissions from $v$ to $u$ between time $t$ and $T$. 

Suppose at time $t$ we have progressed up to node $v$ in layer $l = l(v)$. We want to analyze how long it takes for the next step. For this, we look at node $v$ and fix $w$ to be a neighboring node with a smaller layer number. The existence of such a node is guaranteed by the layering property in \Cref{def:layering}. Now, let $N$ be the set of neighbors of $u$ with the same color as $w$. Note, that because the layering is $C$ collision free these neighbors also have the same layer number as $w$. In \cite{CR} it was proven that over the course of $FD(|N|)$ rounds there is a constant probability for $v$ to receive a transmission from a node in $N$. Here


$$FD(|N|) = \threepartdefotherwise{3\log \frac{n}{D'}}{|N| < \frac{n}{D'}}{3|N|\frac{D'}{n}\log {\frac{n}{D'}}}{\frac{n}{D'} < |N| < \frac{n \log n}{D'}}{3\log n}$$
comes from the definition of the CR-Broadcast protocol in \Cref{sec:decaynew}. Lastly, as explained before, there is at most an independent $1/2$ probability that such a collision free transmission is a $\mu$-failure. In total we get that over the course of at most $s_l = FD(n_l)$ rounds, where $n_l = |\{w'|l(w')=l\}|$, there is an independent chance of at least $1/4e$ that progress to layer $l(w) <l$ is made. Given this, the total time needed to progress from $v$ to the source $s$ is thus dominated by $\sum_{l \in [D']} s_l X_l$ where $X_i$ for all $i \in [D']$ are i.i.d. geometric random variables with success probability $1/4e$. For this setting \Cref{lem:tailbound} shows that the probability that after $T = 8e(\sum_{l \in [D']} s_l + (2\log n + k))$ rounds no sequence of successful transmissions from $s$ to $v$ exists is at most $(1 - 1/4e)^{5e(2\log n + k)} < 2^{- (2\log n + k)}$. This is precisely the probability bound promised in the main claim given that $$\sum_{l \in [D']} s_l \leq D' (3 \log \frac{n}{D'}) + \sum_{l \in [D'], s_l \geq \log 3\frac{n}{D'}} 3|n_l|\frac{D'}{n} \geq  3D' \log \frac{n}{D'} + 3 D' \frac{\sum_l n_l}{n} \leq  3D' \log \frac{n}{D'} + 3 D',$$
where $\sum_l n_l < n$ comes from the disjointness of layers. This shows that $T = \Theta(D' \log \frac{n}{D'} + k \log n + \log^2 n)$ rounds suffice with high probability as claimed.  

We remark that the additive coefficient overhead in the messages in \Cref{alg:NCBcast}, which is one-bit for each of the $k$ messages, can be reduced to $O(\log n)$ bits using standard techniques explained in \cite{BCSTCD}. For this we reuse a schedule for $k = \Theta(\log n)$ messages via pipe-lining. That is, we run a network coded broadcast of $k = \log n$ messages but cut it into phases of $\Theta(\log^2 n)$ rounds. If a node has not received all messages at the end of a phase, it empties its buffer $P$ and restarts in the next phase. From the proof of \Cref{lem:codedBC}, we get that, whp, the $k$ messages proceed in each phase at least as far as the next few levels whose $s_l$ values sum up to $\Theta(\log^2 n)$. In total, $\Theta(D' \log \frac{n}{D'} + \log^2 n)$ rounds still suffice to spread the $k = \log n$ messages. Now, we can repeat the schedule while reusing the same transmission schedule and coding coefficients for every block of $\log n$ messages.
\end{proof}

\smallskip 

Lastly, we present our gossiping result.

\begin{theorem}\label{thm:gossip} There is a randomized distributed algorithm that, with high probability, performs an all-to-all broadcast in $O(n\log n)$ rounds.
\end{theorem}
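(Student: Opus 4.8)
The plan is to obtain gossiping by \emph{composing} the three primitives developed above: I reduce all-to-all broadcast to first electing a single source $s$, and then \textbf{(i)} gathering all $n$ messages at $s$ and \textbf{(ii)} broadcasting these $n$ messages from $s$ to every node. Since in gossiping each of the $n$ nodes holds exactly one message, both sub-problems are instances with $k = n$. The key point is simply that each primitive, once $k=n$ is plugged in, already runs within $O(n\log n)$ rounds, so a constant-many composition of them stays inside the target bound.

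First I would observe that the layering cost is negligible here. Since $G$ is connected we have $D \le n$, and the function $D \log\frac{n}{D}$ (over $1 \le D \le n$) is maximized at $D = n/e$, where it equals $\Theta(n)$; together with $\log^{2+\eps} n = o(n)$ for constant $\eps$ this yields $T_{Lay} = O\!\left(D\log\frac{n}{D} + \log^{2+\eps} n\right) = O(n)$. Hence building the pseudo-BFS layering of \Cref{thm:layer} from the source $s$ costs only $O(n)$ rounds, and I would construct it once and reuse it for both communication phases.

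Next I would run the two phases on top of this layering. Applying the gathering algorithm of \Cref{thm:gathering-total} with destination $s$ and $k = n$ collects every message at $s$ in $O(T_{Lay} + n) = O(n)$ rounds. Then applying the network-coded multi-message broadcast of \Cref{thm:Bcast} with source $s$ and $k = n$ delivers all $n$ messages to every node in $O(T_{Lay} + n\log n) = O(n\log n)$ rounds. Each phase succeeds with high probability, so a union bound over the constantly many phases preserves a high overall success probability, and the additive running times sum to $O(n\log n)$. This matches the $\Omega(n\log n)$ lower bound (Gasieniec--Potapov; also inferable from \cite{NCLB}), establishing optimality.

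The one ingredient not handed to us directly by the earlier lemmas --- and the step I expect to require the most care --- is electing the source $s$, since gossiping is fully symmetric and offers no distinguished node to serve as the gather-destination and broadcast-source. I would handle this with a standard randomized leader-election preprocessing step. Because $T_{BC} = O(n)$ in this regime, even a leader-election routine carrying a polylogarithmic overhead over broadcast time fits comfortably within the $O(n\log n)$ budget; any w.h.p.\ leader election running in $O(n\log n)$ rounds suffices, after which all nodes agree on $s$ and the composition above goes through.
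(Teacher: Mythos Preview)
Your proposal is correct and follows essentially the same four-step composition as the paper's proof: elect a leader, build a pseudo-BFS layering around it, gather all $n$ messages there, then broadcast them via the network-coded algorithm. The only difference is that the paper invokes a specific $O(n)$-round leader-election result (from \cite{SODA-LE}) rather than arguing that any $O(n\log n)$-round election suffices, but your weaker requirement is already enough for the claimed bound.
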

\begin{proof} First, we elect a leader node in $O(n)$ rounds using the algorithm of \cite{SODA-LE}. Then, we construct a pseudo-BFS layering around this leader in time $O(n)$ using \Cref{thm:layer}. We then gather the $n$ messages in the leader node in $O(n)$ rounds using \Cref{thm:gathering}. Finally, we broadcast the $n$ messages from the leader to all the other nodes in time $O(n\log n)$ using \Cref{lem:codedBC}.
\end{proof}


Similar to the approach of the proof of \Cref{thm:gossip} one can also combine the leader election algorithm of \cite{SODA-LE} with the pseudo-BFS layering, gathering, and single-source broadcast algorithms of this paper and obtain a near optimal randomized distributed algorithm for the multi-source version of \Cref{thm:Bcast}, that is, a multi-source $k$-message broadcast. 

\begin{theorem}\label{thm:multisource} 
Given $k$ messages at different sources, there is a randomized distributed algorithm that broadcasts these $k$ messages to all nodes, with high probability, in $O((D\log{\frac{n}{D}}+\log^3 n)\cdot \min\{\log\log n, \log{\frac{n}{D}}\} + k\log n)$ rounds.
\end{theorem}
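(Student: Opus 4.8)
The plan is to reduce the multi-source problem to the single-source primitives already established in this paper, by funneling every message through one distinguished node. Concretely, I would run four phases in sequence: (1) elect a leader $s$; (2) build a pseudo-BFS layering rooted at $s$; (3) gather all $k$ messages at $s$; and (4) broadcast those $k$ messages from $s$ to every node. This mirrors the proof of \Cref{thm:gossip}, with one essential difference: there $k=n$, so a coarse $O(n)$ leader election sufficed, whereas here $k$ may be far smaller than $n$ and we must invoke the full-strength leader election of \cite{SODA-LE}, whose running time is $O((D\log\frac{n}{D}+\log^3 n)\cdot \min\{\log\log n,\log\frac{n}{D}\})$ rounds w.h.p. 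This phase, and not the layering or communication, is the structural bottleneck.

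For phase (1) I take $s$ to be the leader produced by \cite{SODA-LE}. For phase (2) I invoke \Cref{thm:layer} with source $s$ to obtain an $O(1)$-collision-free pseudo-BFS layering of depth $D'=O(D+\log n)$ in $T_{Lay}=O(D\log\frac{n}{D}+\log^{2+\eps}n)$ rounds; crucially, the \emph{same} layering is reused in the two remaining phases. For phase (3), since the $k$ messages start at arbitrary nodes, I treat $s$ as the gathering destination and apply \Cref{thm:gathering} (equivalently \Cref{thm:gathering-total}), collecting all $k$ messages at $s$ in $O(T_{Lay}+k)$ rounds, after which $s$ holds every message. For phase (4) I apply the network-coded single-source broadcast of \Cref{lem:codedBC}/\Cref{thm:Bcast} with source $s$, delivering all $k$ messages to all nodes in $O(T_{Lay}+k\log n)$ rounds. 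Each phase succeeds with high probability, so a union bound over the constant number of phases preserves the overall high-probability guarantee, and since each phase's output is exactly the input required by the next, the composition is valid.

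The only remaining work is the time accounting, which is where the stated bound crystallizes. Summing the four phases gives
\[
O\!\left(\left(D\log\tfrac{n}{D}+\log^3 n\right)\min\{\log\log n,\log\tfrac{n}{D}\}\right) + O(T_{Lay}) + O(T_{Lay}+k) + O(T_{Lay}+k\log n).
\]
Because $T_{Lay}=O(D\log\frac{n}{D}+\log^{2+\eps}n)$ and $\log^{2+\eps}n=O(\log^3 n)$ for $\eps\le 1$, every occurrence of $T_{Lay}$ is absorbed into the leader-election term, and the additive $k$ from gathering is dominated by the $k\log n$ from broadcast. Hence the total collapses to $O((D\log\frac{n}{D}+\log^3 n)\min\{\log\log n,\log\frac{n}{D}\}+k\log n)$, as claimed.

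I expect no deep obstacle: all four ingredients are already proven, so the argument is a clean composition. The one point requiring care — and the only place the $\min\{\log\log n,\log\frac{n}{D}\}$ factor enters — is recognizing that the extra factor in this theorem, relative to the single-source \Cref{thm:Bcast}, is inherited entirely from the leader-election cost of \cite{SODA-LE}. A secondary subtlety is justifying that a single layering serves both the gathering (toward $s$) and the broadcast (from $s$); this holds because the $C$-collision-freeness property of \Cref{def:layering} depends only on the layer numbers and pairwise distances, and is therefore insensitive to the direction of the transmission waves.
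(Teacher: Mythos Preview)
Your proposal is correct and follows essentially the same approach as the paper, which merely sketches that one should combine the leader election of \cite{SODA-LE} with the pseudo-BFS layering, gathering, and single-source broadcast primitives, exactly as in the proof of \Cref{thm:gossip}. Your write-up is in fact more detailed than the paper's own treatment, and your identification of the leader-election step as the sole source of the $\min\{\log\log n,\log\frac{n}{D}\}$ factor is precisely the point.
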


\bibliographystyle{acm}
\bibliography{Bdata}

\begin{thebibliography}{10}

\bibitem{ABLP}
{\sc Alon, N., Bar-Noy, A., Linial, N., and Peleg, D.}
\newblock A lower bound for radio broadcast.
\newblock {\em Journal of Computer and System Sciences (JCSS) 43}, 2 (1991),
  290--298.

\bibitem{BGI1}
{\sc Bar-Yehuda, R., Goldreich, O., and Itai, A.}
\newblock On the time-complexity of broadcast in multi-hop radio networks: An
  exponential gap between determinism and randomization.
\newblock {\em Journal of Computer and System Sciences (JCSS) 45}, 1 (1992),
  104--126.

\bibitem{BII93}
{\sc Bar-Yehuda, R., Israeli, A., and Itai, A.}
\newblock Multiple communication in multi-hop radio networks.
\newblock {\em SIAM Journal on Computing (SICOMP) 22}, 4 (1993), 875--887.

\bibitem{CK}
{\sc Chlamtac, I., and Kutten., S.}
\newblock On broadcasting in radio networks: Problem analysis and protocol
  design.
\newblock {\em IEEE Transactions on Communications (TC) 33}, 12 (1985),
  1240--1246.

\bibitem{CKR}
{\sc Chlebus, B., Kowalski, D., and Radzik, T.}
\newblock Many-to-many communication in radio networks.
\newblock {\em Algorithmica 54}, 1 (2009), 118--139.

\bibitem{CGL}
{\sc Christersson, M., Gasieniec, L., and Lingas, A.}
\newblock Gossiping with bounded size messages in ad hoc radio networks.
\newblock {\em {\upshape In} Proceedings of the Colloquium on Automata,
  Languages and Programming (ICALP)\/} (2002), 377--389.

\bibitem{CR}
{\sc Czumaj, A., and Rytter, W.}
\newblock Broadcasting algorithms in radio networks with unknown topology.
\newblock {\em {\upshape In} Proceedings of the IEEE Symposium on Foundations
  of Computer Science (FOCS)\/} (2003), 492--501.

\bibitem{Gossip10}
{\sc Gasieniec, L.}
\newblock On efficient gossiping in radio networks.
\newblock {\em Colloquium on Structural Information and Communication
  Complexity (SIROCCO) 5869\/} (2010), 2--14.

\bibitem{GPX05}
{\sc Gasieniec, L., Peleg, D., and Xin, Q.}
\newblock Faster communication in known topology radio networks.
\newblock {\em {\upshape In} Proceedings of the ACM Symposium on Principles of
  Distributed Computing (PODC)\/} (2005), 129--137.

\bibitem{GP}
{\sc Gasieniec, L., and Potapov, I.}
\newblock Gossiping with unit messages in known radio networks.
\newblock {\em IFIP Conference on Theoretical Computer Science\/} (2002),
  193--205.

\bibitem{SODA-LE}
{\sc Ghaffari, M., and Haeupler, B.}
\newblock Near-optimal leader election in multi-hop radio networks.
\newblock {\em {\upshape In} Proceedings of the ACM-SIAM Symposium on Discrete
  Algorithms (SODA)\/} (2013), 748--766.

\bibitem{NCLB}
{\sc Ghaffari, M., Haeupler, B., and Khabbazian, M.}
\newblock A bound on the throughput of radio networks.
\newblock {\em ArXiv Preprint abs/1302.0264\/} (2013).

\bibitem{BCSTCD}
{\sc Ghaffari, M., Haeupler, B., and Khabbazian, M.}
\newblock Broadcast in radio networks with collision detection.
\newblock {\em {\upshape In} Proceedings of the ACM Symposium on Principles of
  Distributed Computing (PODC)\/} (2013).

\bibitem{Haeupler11}
{\sc Haeupler, B.}
\newblock Analyzing network coding gossip made easy.
\newblock {\em {\upshape In} Proceedings of the ACM Symposium on Theory of
  Computing (STOC)\/} (2011), 293--302.

\bibitem{KK}
{\sc Khabbazian, M., and Kowalski, D.}
\newblock Time-efficient randomized multiple-message broadcast in radio
  networks.
\newblock {\em {\upshape In} Proceedings of the ACM Symposium on Principles of
  Distributed Computing (PODC)\/} (2011), 373--380.

\bibitem{KP03}
{\sc Kowalski, D., and Pelc, A.}
\newblock Broadcasting in undirected ad hoc radio networks.
\newblock {\em {\upshape In} Proceedings of the ACM Symposium on Principles of
  Distributed Computing (PODC)\/} (2003), 73--82.

\bibitem{KM}
{\sc Kushilevitz, E., and Mansour, Y.}
\newblock An ${\Omega}({D} \log ({N}/{D}))$ lower bound for broadcast in radio
  networks.
\newblock {\em SIAM Journal on Computing (SICOMP) 27}, 3 (1998), 702--712.

\bibitem{FQ06}
{\sc Manne, F., and Xin, Q.}
\newblock Optimal gossiping with unit size messages in known topology radio
  networks.
\newblock {\em Workshop on Combinatorial and Algorithmic Aspects of
  Networking\/} (2006), 125--134.

\bibitem{QinPersonalCommunication}
{\sc Xin, Q.}
\newblock Personal communication, May, 2012.

\end{thebibliography}
\end{document}